\newtheorem{theorem}{Theorem}
\newtheorem{definition}[theorem]{Definition}
\newtheorem{lemma}[theorem]{Lemma}
\newtheorem{observation}[theorem]{Observation}
\newtheorem*{technique-no-number}{Technique}
\newtheorem{corollary}[theorem]{Corollary}
\newtheorem{example}[theorem]{Example}
\DeclareMathOperator{\cost}{\mathsf{cost}}
\DeclareMathOperator{\poly}{\mathsf{poly}}
\DeclareMathOperator{\wcost}{\mathsf{wcost}}
\DeclareMathOperator{\opt}{\mathsf{opt}}
\DeclareMathOperator{\dist}{dist}
\newcommand{\kk}{\texorpdfstring{$k$}{k}}
\newcommand{\R}{\mathbb{R}}
\newcommand{\N}{\mathbb{N}}
\newcommand{\E}{\mathbb{E}}
\newcommand{\V}{\mathbb{V}}
\title{Coresets for \\constrained \kk-median and \kk-means clustering\\ in low dimensional Euclidean space}
\author{Melanie Schmidt\thanks{University of Cologne, Germany, \href{mailto:mschmidt@cs.uni-koeln.de}{mschmidt@cs.uni-koeln.de}}\ \ and Julian Wargalla\thanks{University of Cologne, Germany, \href{mailto:wargalla@cs.uni-koeln.de}{wargalla@cs.uni-koeln.de}}}
\begin{document}

\maketitle

\begin{abstract}
    We study (Euclidean) $k$-median and $k$-means with constraints in the streaming model.
    There have been recent efforts to design unified algorithms to solve constrained $k$-means problems without using knowledge of the specific constraint at hand aside from mild assumptions like the polynomial computability of feasibility under the constraint (compute if a clustering satisfies the constraint) or the presence of an efficient assignment oracle (given a set of centers, produce an optimal assignment of points to the centers which satisfies the constraint). These algorithms have a running time exponential in $k$, but can be applied to a wide range of constraints.
    We demonstrate that a recently proposed technique for solving a specific constrained $k$-means problem, namely fair $k$-means clustering, actually implies streaming algorithms for all these constraints. The technique uses the computation of so-called coresets. A coreset is a small summary that approximately preserves the cost of all feasible solutions. A special type of coreset for which constructions were proposed at the beginning of the coreset era actually satisfy a very strong type of guarantee which implies their applicability to general constrained $k$-means problems. The drawback is that these constructions introduce an exponential dependence on the dimension $d$.
    However, for constant $d$, one immediately gets a streaming algorithm which computes a coreset for any constrained $k$-means problem (under above mild assumptions) and which can be combined with approximation algorithms to get a streaming approximation algorithm. We note that our paper builds heavily on previous work, yet we believe that it is worthwhile to know that streaming coresets for constrained $k$-means (and Euclidean $k$-median) clustering can be obtained fairly easily independently of the specific constraint.
\end{abstract}

\section{Introduction}

Let $(X, \dist)$ be a metric space. We consider a general (sum-based) $k$-clustering problem that asks, given a finite subset $P \subseteq X$ and a positive integer $m$, for a set of $k$ centers $C \subseteq X$ that minimizes the cost function
\begin{align}
    \label{formulationone}
    \cost(P,C) = \sum_{p \in P} \min_{c \in C} \dist(p, c)^m.
\end{align}
As two examples of different flavor that are both individually prominent, we consider the \emph{$k$-median problem} where $m=1$ and $X=P$ and the \emph{$k$-means problem} where $m=2$, $X=\R^d$ and $\dist$ is the Euclidean distance.
Both problems are APX-hard~\cite{G00,JMS02,ABS10,LSW17}, and the best known approximation algorithms achieve an approximation ratio of $2.675+\varepsilon$~\cite{BPRST17} for $k$-median and 6.357~\cite{ANSW17} for $k$-means.
If $k$ is a constant, then both problems allow for a $(1+\varepsilon)$-approximation~\cite{HPM04,KSS10}.
Since the purpose of this paper is mainly streaming algorithms, we consider the special case of \emph{Euclidean} $k$-median, which makes talking about streaming a bit more concise. This means that $P\subset \mathbb{R}^d$ and $\dist$ is the Euclidean metric. The best known approximation algorithm for Euclidean $k$-median achieves a factor of $2.633+\varepsilon$~\cite{ANSW17}.

These results hold for the standard formulations of the problems. When \emph{constraints} are added to the picture, $k$-clustering problems are much less understood. There is a multitude of possible constraints. If we think of the clusters as groups of people that attend the same supermarket or are in the same district, then we want a clustering where no cluster has too many points, i.e., we want to enforce capacities. These may be global (no more than a certain amount of people per district) or individual (this school can accept this many students). If we think of the clustering as a means of anonymization where in the end we output centers as representatives of their clusters and publish these, then we want no cluster to be too small. If we think of a scenario where our data comes from sensors and has measuerment errors, we may want that we get a clustering that has the flexibility to disregard some points as measurement errors. These are only some examples. We know constant-factor approximations for some clustering problems with constraints, for example~\cite{AS12-lowerbounded,CKLV17,KLS18,S10}, yet it is for example a major open problem whether capacitated $k$-median admits a constant-factor approximation.

In this paper we are interested in general results for large classes of constraints. In general, we can model a constraint via the question of which clusterings are allowed under that constraint. For this view, it is convenient to think of a solution of a $k$-clustering problem as a partitioning of $P$ into $k$ clusters.
It is well-known that for $k$-means, the optimal center for a cluster $C_i$ is then the centroid $\mu(C_i)=\frac{1}{|C_i|} \sum_{x \in C_i} x$. This is why sometimes the $k$-means problem is also formulated as follows: Given $P$, compute a partitioning into clusters $C_1,\ldots,C_k$ such that
\begin{align}
    \label{formulationtwo}
    \sum_{i=1}^k \dist^2(C_i,\mu(C_i))
\end{align}
is minimized, where we use the abbreviation $\dist^2(C,z)=\sum_{x \in C} ||x-z||^2$. For any solution $c_1,\ldots,c_k$ to \eqref{formulationone}, we can get a solution of the same or lower cost with respect to \eqref{formulationtwo} if we partition the points by assigning every point to its closest center in $\{c_1,\ldots,c_k\}$, and for any solution $C_1,\ldots,C_k$ for \eqref{formulationtwo}, the centers $\mu(C_1),\ldots,\mu(C_k)$ are a solution for \eqref{formulationone} of the same cost. The same can be done for $k$-median, where if we represent a solution as a partitioning, we implicitly assume that for each cluster a center is chosen optimally (since $X=P$, an optimal center can be found efficiently).

In this partitioning-based view, constraints can be modeled via the partitionings that are allowed, i.e.,  the solution space is restricted to only these partitionings.
Three relatively simple, but often considered variants are clustering with lower bounds, capacities and outliers.

\begin{example}
    In a clustering problem with lower bounds we are given values $\ell_1, \ldots, \ell_k$ and are only allowed to produces clusterings $C_1, \ldots, C_k$ with $|C_i| \geq \ell_i$ for all $i$.
\end{example}

\begin{example}
    Similarly to lower bounds we can also place upper bounds (capacities) on the cluster sizes. That is, given upper bounds $u_1, \ldots, u_k$, we only allow clusterings $C_1, \ldots, C_k$ with $|C_i| \leq u_i$ for all $i$.
\end{example}

\begin{example}
    In the $k$-clustering problem with $z$ outliers we are allowed to ignore up to $z$ points from the input space (these do no factor into the cost of the clustering). Although this is not a clustering constraint per se, we can model this as a $(k+z)$-clustering problem with the restriction that in any valid clustering at least $z$ clusters have to consist of exactly one point.
\end{example}

These are only a few examples, some more follow in Section~\ref{sec-other-constraints}.
Clustering constraints complicate the task of producing a clustering from a given set of centers. Simply assigning every point to its closest center (as done in unrestricted k-clustering problems and suggested above) might not be possible since this can yield an invalid clustering. However, for a lot of important clustering constraints (e.g. all those mentioned in this paper) an optimal assignment of points to centers can actually be computed in polynomial time using min-cost-flow algorithms (see for example~\cite{DX15}).
For the special case that $k$ is a constant, Ding and Xu~\cite{DX15} propose a general method to deal with Euclidean clustering problems with constraints. They show how to compute a list of possible solutions such that with constant probability, at least one solution in the list provides a good solution for the $k$-means problem under any given desired constraint. They model solutions as center sets, so the list contains sets of $k$ centers. The idea is that for any clustering, there is at least one set of $k$ centers in this list which is approximately as good as using the optimal centers (the centroids) for all clusters.
Bhattacharya et. al.~\cite{BJK18} improved this result by providing a smaller list of solutions.

\begin{theorem}[cf. Theorem 1 in \cite{BJK18}]\label{thm-bhat}
    Let $P \subset \mathbb{R}^d$ be a finite set of $n$ points, $k$ a positive integer and $\varepsilon > 0$ some error parameter.
    There is a randomized algorithm that computes in $O\left(n d 2^{\mathcal{\tilde{O}}\left(\frac{k}{\varepsilon}\right)}\right)$ time a list $\mathcal{L}$ of size $2^{ \mathcal{\tilde{O}}\left(\frac{k}{\varepsilon}\right)}$ with the following property. For every clustering $C_1, \ldots, C_k$ there exists with constant probability a set of centers $C = \{c_1, \ldots, c_k\} \in \mathcal{L}$, such that
    \[ \sum_{i = 1}^k \cost(C_i, c_i) \leq (1 + \varepsilon) \sum_{i = 1}^k \cost(C_i, \mu(C_i)).\]
\end{theorem}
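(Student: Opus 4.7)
I would follow the by-now standard template for superset sampling due to Inaba--Katoh--Imai (IKI), in the form adapted by Jaiswal and coauthors for constrained clustering. The key ingredient is the IKI sampling lemma: if $S$ is a multiset of $\Theta(1/\varepsilon)$ points drawn uniformly at random (with replacement) from a cluster $C$, then with constant probability $\cost(C,\mu(S)) \leq (1+\varepsilon)\cost(C,\mu(C))$. Thus if I could somehow sample $\Theta(1/\varepsilon)$ points uniformly from each target cluster $C_i$, taking centroids would already produce a good center set. Since the clusters are unknown, the list $\mathcal{L}$ is built by trying all ``plausible'' sampled subsets and letting the analysis guarantee that at least one branch is good.

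Concretely, I would build a center set in $k$ rounds, with a branching tree whose leaves form $\mathcal{L}$. In each round $i$, given a partial set $c_1, \ldots, c_{i-1}$, draw $N = \Theta(k/\varepsilon)$ points $S_i$ from $P$ using $D^2$-sampling relative to the current centers (for round $1$ the sampling is uniform). Branch by enumerating every subset $T \subseteq S_i$ of size $\Theta(1/\varepsilon)$ and setting $c_i := \mu(T)$. The branching factor per round is $\binom{N}{\Theta(1/\varepsilon)} = (k/\varepsilon)^{O(1/\varepsilon)} = 2^{\tilde O(1/\varepsilon)}$, so the total number of leaves is $2^{\tilde O(k/\varepsilon)}$, matching the claimed size of $\mathcal{L}$. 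The running time is dominated by enumerating centroids and scanning the $n$ points per center, giving the stated $O\bigl(n d \cdot 2^{\tilde O(k/\varepsilon)}\bigr)$.

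For correctness I would induct on the round index, maintaining the invariant that with constant probability some branch of the tree produces partial centers $c_1, \ldots, c_i$ whose cost on $C_1 \cup \cdots \cup C_i$ (after a suitable relabeling) is within $(1+\varepsilon)$ of $\sum_{j \leq i} \cost(C_j, \mu(C_j))$. The inductive step splits on whether the unapproximated clusters $C_{i+1}, \ldots, C_k$ collectively contribute at least an $\varepsilon$-fraction of the current residual cost. If not, we are already done without doing anything useful in this round. Otherwise some cluster, say $C_{i+1}$, carries a $\Omega(1/k)$-fraction of the residual $D^2$-mass, so the sample $S_i$ of size $\Theta(k/\varepsilon)$ contains $\Theta(1/\varepsilon)$ points of $C_{i+1}$ with constant probability; enumeration then ensures that the subset consisting exactly of these points is tried, and IKI delivers the desired approximation of $\mu(C_{i+1})$.

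The main obstacle is the final IKI step, because a $D^2$-sample restricted to $C_{i+1}$ is \emph{not} uniform over $C_{i+1}$: points of $C_{i+1}$ lying near $c_1, \ldots, c_i$ receive too little weight. The clean resolution is to split $C_{i+1}$ into a ``far'' subpopulation, on which $D^2$-sampling is approximately uniform so IKI applies directly, and a ``near'' subpopulation whose contribution to $\cost(C_{i+1}, \mu(C_{i+1}))$ can be charged to the existing good approximations via a squared-distance triangle inequality. Carrying this bookkeeping across all $k$ rounds and amplifying success probabilities by a constant-factor blow-up in the per-round sample size, so that all $k$ rounds succeed simultaneously on a single root-to-leaf path with constant probability, is what glues the induction together and yields the theorem.
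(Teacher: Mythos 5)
The paper does not actually prove Theorem~\ref{thm-bhat}: it is imported (``cf.\ Theorem 1'') from Bhattacharya et al.~\cite{BJK18} and used as a black box, so there is no in-paper proof to compare against. Your sketch --- iterative $D^2$-sampling rounds with exhaustive enumeration of $\Theta(1/\varepsilon)$-subsets, the Inaba--Katoh--Imai centroid lemma, the case split on whether the remaining clusters carry an $\varepsilon$-fraction of the residual cost, and the near/far decomposition to repair the non-uniformity of $D^2$-sampling restricted to a target cluster (with the near part charged to the already-chosen centers) --- is a faithful outline of the argument actually used in that cited work, including the essential point that the guarantee must hold for an \emph{arbitrary} target partition rather than only the Voronoi partition.
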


So Theorem~\ref{thm-bhat} says that for every clustering, there is a good solution contained in $\mathcal{L}$. A constraint is identified with the subsets of clusterings that it allows. Now if the clustering constraint at hand admits an efficient optimal assignment algorithm as we outlined above, then applying this algorithm to every solution in the list and taking the one with lowest cost yields a probabilistic PTAS.

For example, if we have $k+z$ centers but $z$ are designated to be outliers, then we can model this by setting up a flow network where the points are the sources (with demand $1$ each) and the centers are the sinks (with an unbounded demand for the normal centers and a demand of $1$ for the outliers). Similar constructions work for upper and lower bounds on the number of points, and also for fairness constraints~\cite{DX15,SSS19}.

Going through all solutions in the list and applying this method to find the one with lowest cost gives a $(1+\varepsilon)$-approximation.

In this paper, we show that when $d$ is small, a similar result can be obtained by using known techniques from the area of \emph{coresets} for $k$-clustering problems (and in particular, by applying a recent paper by Schmidt, Schwiegelshohn and Sohler~\cite{SSS19}). The advantage is that this approach directly implies streaming algorithms and also parallel computation. The drawback is the exponential dependency on the dimension.

A streaming algorithm only reads the data once and can store very little in memory. Both in theory and practice, the most popular tool to achieve this is to use the pass over the data to compute a (weighted) summary of the points and then solve the problem on the summary in the end (or on demand, during the stream).
A \emph{coreset} is such a weighted summary which satisfies a very strong approximation guarantee: For a coreset $S$, it is guaranteed that for \emph{any} solution to the $k$-clustering problem, i.e., for any set of $k$ centers $C$, clustering the points in $S$ with $C$ costs $(1\pm \varepsilon)$ times the cost of clustering the original points with $C$. Since this is true for any possible solution, running an approximation algorithm on $S$ yields an approximative solution for the original data, too.

\begin{definition}[Coreset~\cite{HPM04}]\label{def:coreset}
    A set $S \subseteq \mathbb{R}^d$ together with non-negative weights $w:S \to \mathbb{N}$ is a $(k,\varepsilon)$-coreset for a point set $P\subseteq \mathbb{R}^d$ 
    if for every set $C
        \subseteq \mathbb{R}^d$ of $k$ centers we have
    $(1-\varepsilon) \cdot \cost(P,C) \le  \cost(S,C) \le (1+\varepsilon) \cdot \cost(P,C)$.
\end{definition}

The definition requires that for \emph{every} possible solution, so for every set of $k$ centers $C$, the cost of $P$ with $C$ is approximated by the (weighted) cost of the points in $S$ with the centers $C$. This is a fairly strong definition and it implies that any $\alpha$-approximation algorithm (that can handle weighted points) can be run on the coreset in order to obtain an $\alpha(1+O(\varepsilon))$-approximation.

This is not the whole reason for the success of coresets. Indeed, it turns out that coresets for $k$-clustering problems obeying Definition~\ref{def:coreset} have another very convenient property: They are \emph{mergeable} in an oblivious way. This means that if $(S_1,w_1)$ is a coreset for a set $P_1$ and $(S_2,w_2)$ is a coreset for a set $P_2$, the the union $S_1 \cup S_2$ (with concatenated weight functions) is a coreset for $P_1 \cup P_2$. This is true because $k$-clustering cost functions are linear; the cost of $P_1 \cup P_2$ is the sum of the cost of $P_1$ and the cost of $P_2$, for any set of centers.

This mergeability makes it possible to use the so-called merge-and-reduce technique for coreset computations~\cite{BS80}. The idea behind this technique is to read the stream in blocks, turn every block into a coreset, and union coresets (\lq merge\rq) and then \lq reduce\rq\ them by again calling the coreset algorithm, following a computation structure that resembles a complete binary tree. The merge-and-reduce framework allows to convert a coreset algorithm into a streaming algorithm that can output a coreset for the whole data set (the resulting coreset will be a bit larger than the offline version).

There are several approaches to construct coresets, and the smallest sizes for $k$-median, $k$-means and many other $k$-clustering variants are currently achieved by using the work of Feldman and Langberg~\cite{FL11}. However, in this paper we point out the universality of an older approach to construct coresets that we call \emph{movement-based}. Movement-based constructions are for example given by Har-Peled and Mazumdar~\cite{HPM04}, Frahling and Sohler~\cite{FS05}, and Fichtenberger et. al.~\cite{FGSSS13} (despite not being named movement-based in these works).

\begin{definition}
    \label{def:movement_based}
    A weighted set $(S, w)$ is a movement-based $(k, \varepsilon)$-coreset for $P$ if
    \begin{enumerate}
        \item $w$ maps to integer values,
        \item there exists a mapping $\pi: P \to S$ with $|\pi^{-1}(s)| = w(s)$ for all $s \in S$ that satisfies
              \[ \sum_{x \in P} \dist(x, \pi(x))^m \leq \left(\frac{\varepsilon}{2m}\right)^m \opt_P.\]
    \end{enumerate}
    The expanded version $S'$ of $S$ is a multi set that contains $w(s)$ copies of every point $s \in S$.
\end{definition}

The naming is inspired by imagining that the coreset is produced by moving points around such that multiple points coincide; these are then replaced by a weighted point. The specific condition stems from a lemma that is used in the literature on movement-based coreset constructions and which we prove in Appendix~\ref{movement-appendix} in the following fairly general form.

\begin{restatable}{lemma}{lemMovementCoreset}
    \label{lem:movement_based_is_coreset}
    Let $C \subseteq X$ be a set of $k$ centers and $\varepsilon \in (0,1]$ be a real number.
    If there exists a bijection $\pi: P \to Q$ of finite subsets $P, Q \subseteq X$, such that
    $ \sum_{x \in P} \dist(x, \pi(x))^m \leq \left(\frac{\varepsilon}{2m}\right)^m \cdot \cost(P,C)$,
    then it holds that
    \[ | \cost(P,C) - \cost(Q,C) | \leq \varepsilon \cdot \cost(P,C). \]
\end{restatable}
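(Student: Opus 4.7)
The plan is to reduce the cost comparison pointwise via the triangle inequality, aggregate the pointwise bounds through Minkowski's inequality to obtain a bound on $\cost(\cdot,C)^{1/m}$, and then translate back to the desired bound on $\cost(\cdot,C)$.

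First, for each $x \in P$, set $a_x = \min_{c \in C} \dist(x,c)$, $b_x = \min_{c \in C} \dist(\pi(x),c)$, and $d_x = \dist(x,\pi(x))$. Since $\pi: P \to Q$ is a bijection, $\cost(P,C) = \sum_{x \in P} a_x^m$ and $\cost(Q,C) = \sum_{x \in P} b_x^m$. Bounding $b_x$ by the distance from $\pi(x)$ to a center realising $a_x$ (and symmetrically $a_x$ by the distance from $x$ to a center realising $b_x$), the triangle inequality gives $b_x \leq a_x + d_x$ and $a_x \leq b_x + d_x$ for every $x \in P$.

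Next, I apply Minkowski's inequality for the $\ell_m$-norm (valid since $m \geq 1$) to the first family of bounds:
\[ \cost(Q,C)^{1/m} = \Bigl(\sum_{x \in P} b_x^m\Bigr)^{1/m} \leq \Bigl(\sum_{x \in P} (a_x + d_x)^m\Bigr)^{1/m} \leq \cost(P,C)^{1/m} + \Bigl(\sum_{x \in P} d_x^m\Bigr)^{1/m}. \]
The hypothesis $\sum_{x \in P} d_x^m \leq (\varepsilon/(2m))^m \cost(P,C)$ bounds the rightmost term by $(\varepsilon/(2m)) \cost(P,C)^{1/m}$, so raising to the $m$-th power yields $\cost(Q,C) \leq (1 + \varepsilon/(2m))^m \cost(P,C)$. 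The symmetric application to $a_x \leq b_x + d_x$ gives $\cost(P,C)^{1/m} \leq \cost(Q,C)^{1/m} + (\varepsilon/(2m)) \cost(P,C)^{1/m}$, whence $\cost(Q,C) \geq (1 - \varepsilon/(2m))^m \cost(P,C)$.

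Finally, the only technical step is the scalar inequality $(1 + \varepsilon/(2m))^m \leq 1 + \varepsilon$ for every integer $m \geq 1$ and every $\varepsilon \in (0,1]$; this follows from $(1 + \varepsilon/(2m))^m \leq e^{\varepsilon/2}$ combined with the routine bound $e^{\varepsilon/2} \leq 1 + \varepsilon$ on $[0,1]$, checkable by comparing values at $0$ and derivatives. The matching lower bound $(1 - \varepsilon/(2m))^m \geq 1 - \varepsilon/2 \geq 1 - \varepsilon$ is immediate from Bernoulli's inequality. Combining the two chains of estimates yields $|\cost(P,C) - \cost(Q,C)| \leq \varepsilon \cdot \cost(P,C)$. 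I expect no real obstacle; the factor $1/(2m)$ in the hypothesis (rather than $1/m$) is precisely what provides the slack needed to absorb the $m$-th power uniformly in $m$ when moving from the $m$-th-root estimate back to the cost itself.
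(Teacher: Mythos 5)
Your proof is correct and follows essentially the same route as the paper's: pointwise triangle inequality on the distances to $C$, Minkowski's inequality in the $\ell_m$-norm, and the scalar bounds $(1+\varepsilon/(2m))^m \le e^{\varepsilon/2} \le 1+\varepsilon$ and $(1-\varepsilon/(2m))^m \ge 1-\varepsilon$. The only (minor, and arguably cleaner) difference is in the lower-bound direction: you keep the error term expressed in $\cost(P,C)^{1/m}$ and apply Bernoulli directly, whereas the paper reverses the roles of $P$ and $Q$ and then needs a short case analysis to convert a bound relative to $\cost(Q,C)$ into one relative to $\cost(P,C)$.
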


If the cost of moving the points along $\pi$ is upper bounded by $\left(\frac{\varepsilon}{2m}\right)^m \cdot \opt_P$, then we get a coreset (see Appendix~\ref{movement-appendix}). 
This justifies Definition~\ref{def:movement_based}.

\begin{restatable}{corollary}{corMovBasedCoreset}
    \label{cor:movement_based_is_coreset}
    If $(S, w)$ is satisfies Definition \ref{def:movement_based}, then $(S, w)$ is a $(k, \varepsilon)$-coreset for $P$.
\end{restatable}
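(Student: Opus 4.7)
The plan is to apply Lemma \ref{lem:movement_based_is_coreset} to the unweighted multiset $S'$ obtained by expanding $S$ according to $w$, then translate the resulting inequality back to the weighted version $(S,w)$.

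First I would observe that the weighted cost of $S$ coincides with the cost of its expansion: for any set $C$ of $k$ centers,
\[
\cost(S,C) \;=\; \sum_{s \in S} w(s)\, \min_{c\in C}\dist(s,c)^m \;=\; \sum_{s' \in S'} \min_{c\in C}\dist(s',c)^m \;=\; \cost(S',C),
\]
using that $S'$ contains exactly $w(s)$ copies of each $s \in S$ and that these copies have distance $0$ to $s$.

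Next I would promote the mapping $\pi:P\to S$ from Definition \ref{def:movement_based} to a bijection $\tilde\pi:P\to S'$. Since $|\pi^{-1}(s)|=w(s)$ and $S'$ contains precisely $w(s)$ copies of $s$, for each $s\in S$ I pick an arbitrary bijection between $\pi^{-1}(s)$ and the copies of $s$ in $S'$; concatenating these gives the desired bijection $\tilde\pi$. Because each point in $P$ is sent by $\tilde\pi$ to a copy of $\pi(x)$, which sits at the same location as $\pi(x)$, the movement cost is preserved:
\[
\sum_{x\in P} \dist(x,\tilde\pi(x))^m \;=\; \sum_{x\in P}\dist(x,\pi(x))^m \;\le\; \left(\frac{\varepsilon}{2m}\right)^m \opt_P.
\]
Since $\opt_P \le \cost(P,C)$ for any set $C$ of $k$ centers (the left-hand side is the minimum of the right over all such $C$), the hypothesis of Lemma \ref{lem:movement_based_is_coreset} is satisfied for $Q=S'$ and every $C$.

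Applying the lemma yields $|\cost(P,C) - \cost(S',C)|\le \varepsilon\cdot \cost(P,C)$, and combining this with the first observation gives the coreset inequality $(1-\varepsilon)\cost(P,C)\le \cost(S,C)\le (1+\varepsilon)\cost(P,C)$ for every $C \subseteq \mathbb R^d$ of size $k$. The only subtle step is the reduction from the worst-case bound in terms of $\cost(P,C)$ required by the lemma to the hypothesis in terms of $\opt_P$ used in the definition, which is resolved by the trivial inequality $\opt_P\le \cost(P,C)$; apart from this, everything is bookkeeping about turning weights into multiplicities.
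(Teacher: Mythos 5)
Your proof is correct and follows essentially the same route as the paper's: expand $(S,w)$ into the multiset $S'$, lift $\pi$ to a bijection $P \to S'$ (which preserves the movement cost since copies coincide with their originals), use $\opt_P \le \cost(P,C)$ to meet the hypothesis of Lemma~\ref{lem:movement_based_is_coreset}, and conclude. Your write-up is in fact a bit more careful than the paper's, which leaves the construction of the bijection and the identification $\cost(S,C)=\cost(S',C)$ implicit.
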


As indicated above, coresets can be used to obtain streaming algorithms, and indeed, movement-based coreset constructions that work in the streaming setting are known.

\begin{theorem}[Known results on movement-based coresets]
    \label{thm:known-sizes-coresets}
    Let $P \in \mathbb{R}^d$ be a point set given as a stream in the insertion-only model, let $0 < \varepsilon \le 1$, let $k \in \mathbb{N}$  and let the dimension $d$ be constant. A movement-based $(k, \varepsilon)$-coreset of size
    \begin{itemize}
        \item ${O}(k \varepsilon^{-d} \log^2 n)$ for $k$-median
        \item $\tilde{O}(k \varepsilon^{-d} \log n)$ for $k$-means
    \end{itemize}
    can be constructed time that is polynomial in $n$, $k$ and the spread $\log \Delta$ of $P$ ($\Delta$ is the maximum pairwise distance divided by the minimum pairwise distance) and exponential in $d$.
\end{theorem}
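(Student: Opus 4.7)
The plan is to first construct an offline movement-based coreset by a grid (quadtree) decomposition around a constant-factor approximate solution, and then lift it to the streaming setting via the merge-and-reduce framework.

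For the offline ingredient, I would start by computing a constant-factor approximate solution $C^{*}=\{c_{1}^{*},\ldots,c_{k}^{*}\}$ for unconstrained $k$-clustering together with an estimate $\widehat{\mathrm{OPT}}\in\Theta(\opt_P)$. Around each $c_{i}^{*}$ I would decompose space into concentric annuli $A_{i,j}$ of exponentially growing radii $R_j=2^{j}R_{\min}$, where $R_{\min}$ is determined by $\widehat{\mathrm{OPT}}$ and the smallest interpoint distance, giving $O(\log n + \log \Delta)$ annuli per center. Within each annulus I would impose a uniform grid of side length $\Theta\bigl(\varepsilon R_j/(m\sqrt{d})\bigr)$, assign every input point in $A_{i,j}$ to the center of its grid cell, and weight each occupied cell by the number of input points falling in it. This gives the candidate $(S,w)$ satisfying the structural conditions of Definition~\ref{def:movement_based}.

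To verify the movement bound, I would note that every point $x\in A_{i,j}$ moves by at most the cell diameter $\Theta(\varepsilon R_j/m)$, while $\dist(x, c_{i}^{*})=\Theta(R_j)$, so the movement is at most $O(\varepsilon/m)\cdot \dist(x,C^{*})$. Raising to the $m$-th power and summing yields $\sum_{x\in P}\dist(x,\pi(x))^m\le O(1)\cdot (\varepsilon/m)^m\cdot \cost(P,C^{*})$, and since $\cost(P,C^{*})=O(\opt_P)$, a constant-factor rescaling of $\varepsilon$ gives the $(\varepsilon/(2m))^m\opt_P$ budget required by Definition~\ref{def:movement_based}; Corollary~\ref{cor:movement_based_is_coreset} then certifies that $(S,w)$ is a coreset. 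The size equals the number of occupied cells, at most $O(\varepsilon^{-d})$ per annulus, hence $O(k\varepsilon^{-d}\log n)$ in total offline (and an analogous bound for $k$-median with one extra log for the discretized radius scale). For the streaming bound I would plug this construction into the Bentley--Saxe merge-and-reduce scheme~\cite{BS80}, which is legal because movement-based coresets are mergeable: read the stream in blocks of size equal to the offline coreset size, produce one coreset per block, and iteratively merge-and-reduce pairs of coresets along a balanced binary tree of depth $O(\log n)$. To keep the compounded error under control, each per-level coreset is built with precision $\varepsilon/\Theta(\log n)$, which multiplies the coreset size by a $\log n$ factor (swallowed by $\log^2 n$ for $k$-median and by the $\tilde O$ notation for $k$-means).

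The main obstacle is the second step, the careful bookkeeping of the movement cost. One has to treat separately the points that fall inside the innermost ball $B(c_{i}^{*},R_{\min})$, where the argument ``movement is a small fraction of the distance to $C^{*}$'' breaks down; these points must be represented exactly (or at a grid fine enough to contribute negligibly) so that the overall movement budget stays within $(\varepsilon/(2m))^m \opt_P$. A second subtlety is to balance grid resolution against annulus count so that the asymmetric offline sizes for $m=1$ and $m=2$ come out correctly: squared distances give an extra $\varepsilon$-factor of slack in the budget, which allows geometrically coarser grids in the $k$-means case and saves exactly the log factor that separates $\log^2 n$ from $\tilde O(\log n)$. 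Matching the constants exactly would require invoking the analyses of Har-Peled and Mazumdar~\cite{HPM04} for $k$-median and of Frahling and Sohler~\cite{FS05} (as refined by Fichtenberger et al.~\cite{FGSSS13}) for $k$-means; the plan is to quote those sizes rather than rederive them from scratch.
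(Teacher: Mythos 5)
The paper does not actually prove this theorem: its ``proof'' is a two-sentence pointer to the literature, quoting the $k$-means bound from Fichtenberger et al.~\cite{FGSSS13} and the $k$-median bound from Har-Peled and Mazumdar~\cite{HPM04}, with the stated size taken from the updated version~\cite{HPM18}. Your offline sketch --- constant-factor approximation, exponential annuli, grids of side $\Theta(\varepsilon R_j/(m\sqrt{d}))$, movement charged against $\dist(x,C^{*})$, special handling of the innermost ball --- is a faithful reconstruction of the Har-Peled--Mazumdar construction, and your closing decision to quote the exact sizes from the cited analyses lands you where the paper itself ends up.

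The genuine gap is in your streaming step. Plugging the offline construction into Bentley--Saxe merge-and-reduce~\cite{BS80} with per-level precision $\varepsilon/\Theta(\log n)$ does not multiply the size by ``a $\log n$ factor'': since the offline size scales as $\varepsilon^{-d}$, substituting $\varepsilon/\Theta(\log n)$ for $\varepsilon$ inflates that term to $\varepsilon^{-d}\log^{d}n$, and together with the $O(\log n)$ levels that must be stored you get $O(k\varepsilon^{-d}\log^{d+2}n)$. The theorem keeps $\varepsilon^{-d}$ explicit, so the extra $\log^{d}n$ cannot be absorbed, and for $d\ge 1$ this is strictly weaker than the claimed $O(k\varepsilon^{-d}\log^{2}n)$. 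The cited constructions avoid exactly this: Frahling--Sohler~\cite{FS05} and Fichtenberger et al.~\cite{FGSSS13} maintain the movement-based grid statistics directly in the stream at a single precision $\varepsilon$ rather than composing a black-box coreset with generic merge-and-reduce, and the $\log^{2}n$ for $k$-median in~\cite{HPM18} likewise rests on a more careful argument than the naive $\varepsilon\mapsto\varepsilon/\log n$ substitution. Your second ``subtlety'' --- attributing the gap between $\log^{2}n$ and $\tilde O(\log n)$ to slack from squared distances --- is also off the mark; the difference comes from the streaming machinery, not from the exponent $m$. If, as you say, you intend to quote the cited sizes anyway, then do only that; the merge-and-reduce paragraph as written proves a weaker bound than the one stated.
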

\begin{proof}
    The result for $k$-means is from Fichtenberger et. al.~\cite{FGSSS13}. Although that algorithm probably also works for $k$-median, it is not covered there, but the original paper by Har-Peled and Mazumdar~\cite{HPM04} covers both. The size stated in the theorem is from the updated version of the paper~\cite{HPM18}.
\end{proof}

Now for constraints, a subtle thing happens: While Lemma~\ref{lem:movement_based_is_coreset} is still true and can still be used to obtain a coreset, these coresets are not longer mergeable. They thus lose their main benefit for obtaining streaming (and distributed) algorithms.

To see this, consider for example \emph{lower bounded} $k$-clustering. Given a point set $P$ and two numbers $k$ and $\ell$, choose $k$ centers and assign all points to a center in such a way that every chosen center gets at least $\ell$ points assigned. 
Imagine a one-dimensional point set with $\ell$ points at $0$ and $\ell$ points at $1$, and set $k=2$. The optimum solution puts one center at $0$ and one at $1$ and costs nothing. A coreset with the spirit of Definition~\ref{def:coreset} can also easily be found: We store a point at $0$, one at $1$, and give each point weight $\ell$. This is virtually equivalent to the original point set and of size two.

However, assume that we do not get the whole point set at once, but we get two different point sets $P_1$ and $P_2$, each containing $\ell/2$ points at either location (let's say $\ell$ is even). Suppose we are supposed to compute a coreset for each set individually and obtain a coreset for $P$ as by merging the two coresets.

Now the crucial observation is that for $P_1$ individually, any lower-bounded clustering is very expensive: Since there are only $\ell /2$ points in $P_1$, we can only open one center even though $k=2$. So no matter what we do, we pay at least $(\ell/2) \cdot (1/4)$. This means that if we follow the spirit of Definition~\ref{def:coreset} -- demanding that we disturb the cost by at most an $\varepsilon$-fraction -- then we are actually allowed a huge error. We could compute a \lq coreset\rq\ with an error of around $\varepsilon \ell$. Now if we compute such \lq coresets\rq\ for $P_1$ and $P_2$ and merge them, we get a \lq coreset\rq\ for $P$ with error proportional to $\varepsilon \ell$. But if we union $P_1$ and $P_2$, the optimum cost becomes zero! Thus, the error is way too large, we did not actually get a coreset for $P$.

So the bottom line is that the union of $S_1$ and $S_2$ is not a coreset for $P_1 \cup P_2$ because the cost function \emph{can decrease} when we merge point sets. This happens in a similar fashion for other constraints, for example for \emph{fair $k$-means} as outlined in~\cite{SSS19}. However,~\cite{SSS19} provides a way to deal with this problem in the form of a stronger coreset definition. We will demonstrate that ideas from~\cite{SSS19} can be applied much more broadly to obtain mergable coresets for $k$-clustering problems under constraints.

\paragraph*{Closely related work}

This paper builds heavily on the work of Schmidt, Schwiegelshohn and Sohler~\cite{SSS19} where it is observed that movement-based coresets achieve mergeable coresets for clustering under specific fairness constraints. We demonstrate that the techniques developed there can be applied to a much larger class of constrained clustering problems. This follows up upon a recent line of research initiated by Ding and Xu~\cite{DX15} and further refined by Bhattacharya~\cite{BJK18} to obtain polynomial-time approximation schemes for clustering with constraints.
The coreset approach can also achieve approximation schemes yet with the benefit that they can be executed in streaming and be used in distributed settings.
This works by using coresets which have a long history. We use movement-based coreset constructions as developed in~\cite{HPM04,FS05,FGSSS13}. This means that the coreset sizes depend exponentially on $d$ (a drawback compared to~\cite{BJK18}). An advantage is that movement-based coreset constructions also have available and fairly efficient implementations (see for example \cite{FGSSS13}) which only need very minor adaptations to work for the constrained case.

For fair clustering, more coreset constructions are known. Independently of~\cite{SSS19},  Huang, Jiang and Vishnoi~\cite{HJV19} develop a coreset construction for fair clustering which has a size that is exponential in the dimension. Bandyapadhyay, Fomin and Simonov~\cite{BFS20} show that it is possible to avoid the exponential dependence on $d$ and give a coreset of size $\poly(\log n, \epsilon, d, t)$, where $t$ is the number of different colors in the input.

The best known constructions for unconstrained clustering are derived from the works by Feldman and Langberg~\cite{FL11}.
Huang et. al.~\cite{HJLW18} develop a coreset for $k$-median in general (doubling) metrics with outliers but not in the streaming setting. Coresets were also recently used to get better bounds on approximating capacitated $k$-median and $k$-means in an FPT-setting by Cohen-Addad and Li~\cite{CAL19}. This work achieves a $(3+\varepsilon)$-approximation for capacitated $k$-median and a $(9+\varepsilon)$-approximation for capacitated $k$-means in time that is polynomial if the number of centers $k$ is a constant (but the result is not in a streaming setting). Coresets for constrained clustering also follow from the construction in this paper, yet the movement-based constructions here yield larger coreset sizes which depend exponentially on the doubling dimension of the underlying metric space.

\section{Coresets for lower bounds, upper bounds and outliers}

We start by considering prominent constraints that restrict the number of points in clusters. We identify such constraints by a set of vectors. Each vector describes one allowed clustering in terms of sizes, i.e., it specifies the number of points we want in every cluster in the partitioning.

\begin{definition}
    A \emph{size constraint} for $P$ is a set of vectors $\mathbf{K} \subseteq \mathbb{N}^k$ with $\|K\| = |P|$ for all $K \in \mathbf{K}$. The vectors describe which cluster sizes are allowed.
    An assignment $\alpha: P \to C$ to a set of $k$ centers $C = \{c_1, \ldots,c_k\}$ satisfies the constraint $\mathbf{K}$ if there exists a vector $K \in \mathbf{K}$ such that $|\alpha^{-1}(c_i)| = K_i$ for all $i = 1, \ldots, k$. That is, there have to be exactly $K_i$ points in cluster $i$.
\end{definition}

This definition covers the three constraints we mentioned above.

\begin{lemma}
    \label{lem:lower_bounds_are_size_constraints}
    Lower bound and upper bound constraints as well as outliers are special cases of size constraints.
\end{lemma}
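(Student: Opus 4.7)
The plan is to handle the three cases in turn by explicitly writing down a set $\mathbf{K} \subseteq \mathbb{N}^k$ (or $\mathbb{N}^{k+z}$ in the outlier case) whose member vectors precisely describe the cluster-size profiles permitted by the respective constraint, and then to verify that an assignment $\alpha$ is feasible under the original definition if and only if the tuple $(|\alpha^{-1}(c_1)|, \ldots, |\alpha^{-1}(c_k)|)$ lies in this set. Throughout, the normalization $\|K\|_1 = |P|$ will be automatic because every input point is assigned to exactly one cluster.

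For lower bounds with parameters $\ell_1, \ldots, \ell_k$, I would set
\[
    \mathbf{K}_{\text{lb}} = \left\{ K \in \mathbb{N}^k \ \middle|\ K_i \geq \ell_i \text{ for all } i = 1, \ldots, k,\ \sum_{i=1}^k K_i = |P| \right\}.
\]
For upper bounds $u_1, \ldots, u_k$, I would analogously take
\[
    \mathbf{K}_{\text{ub}} = \left\{ K \in \mathbb{N}^k \ \middle|\ K_i \leq u_i \text{ for all } i = 1, \ldots, k,\ \sum_{i=1}^k K_i = |P| \right\}.
\]
The equivalence with the Examples is immediate from the definition of size constraints: an assignment $\alpha$ has $|\alpha^{-1}(c_i)| \geq \ell_i$ (resp.\ $\leq u_i$) for all $i$ if and only if its size vector lies in $\mathbf{K}_{\text{lb}}$ (resp.\ $\mathbf{K}_{\text{ub}}$).

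For the outlier case, recall that the model uses $k+z$ centers with the requirement that at least $z$ of the resulting clusters have size exactly one. I would set
\[
    \mathbf{K}_{\text{out}} = \left\{ K \in \mathbb{N}^{k+z} \ \middle|\ |\{ i : K_i = 1 \}| \geq z,\ \sum_{i=1}^{k+z} K_i = |P| \right\},
\]
interpreted as a size constraint with parameter $k+z$ in place of $k$. Again, the feasibility correspondence is by definition. The only thing worth double-checking is that Definition of size constraints is stated for an arbitrary positive integer (the length of the vectors) so that changing $k$ to $k+z$ poses no formal issue; this is a cosmetic point rather than a real obstacle. In total there is nothing technically hard here—the lemma is a translation exercise—so the main care is to state each of the three witness sets clearly and to invoke the size-constraint definition in the right direction for each.
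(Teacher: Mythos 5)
Your proposal is correct and follows essentially the same route as the paper: it exhibits an explicit witness set of size vectors for each of the three constraints and observes that feasibility corresponds to membership by definition. The only cosmetic differences are that you make the normalization $\sum_i K_i = |P|$ explicit (the paper leaves it implicit in the definition of a size constraint) and that for outliers you require \emph{some} $z$ coordinates to equal one rather than fixing the first $z$, which is equivalent up to relabeling of clusters.
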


\begin{proof}
    Say we are considering a lower bound constraint with values $\ell_1, \ldots \ell_k$. Then this the same as imposing the size constraint
    \[ \{ K \in \mathbb{N}^k \; | \; \ell_i \leq K_i \leq n \text{ for all } i = 1, \ldots, k\}.\]
    Similarly upper bound constraints with values $u_1, \ldots, u_k$ can be described by the size constraint
    \[ \{ K \in \mathbb{N}^k \; | \; 0 \leq K_i \leq u_i \text{ for all } i = 1, \ldots, k\}.\]
    As outlined earlier, the $k$-clustering problem with $z$ outliers can be viewed as the $(k+z)$-clustering problem with the restricion, that $z$ clusters have to consist of exactly one element. Thus we can just impose the size constraint
    \[ \{ K \in \mathbb{N}^{k+z} \; | \; K_i = 1\text{ for all } i = 1, \ldots, z\}\]
    to get the restriction.
\end{proof}

The description of a constraint in this form may have a very large size because we basically enumerate all feasible combinations of cluster sizes. However, we will only need this description in proofs. Now we define a general size constrained clustering problem.

\begin{definition}
    In the size constrained $k$-clustering problem we are additionally given a size constraint $\mathbf{K} \subset \mathbb{N}^k$ for $P$. For an assignment $\alpha: P \to C$ to a set of $k$ centers $C \subseteq X$ define its cost as
    \[ \cost(\alpha) = \sum_{x \in P} \dist(x, \alpha(x))^m.\]
    The goal is to find a set of $k$ centers $C \subseteq X$ that minimizes
    \[ \cost_\mathbf{K}(P, C) = \min_\alpha \cost(\alpha),\]
    where $\alpha$ ranges over all assignments of $P$ to sets of $k$ centers that satisfy $\mathbf{K}$.
\end{definition}

What we want is a summary of the input points that is mergeable and approximately preserves the cost of an optimal assignment. This place will be taken by size coresets. Essentially, they preserve the cost of an optimal valid assignment under all possible size constraints.

\begin{definition}
    A $(k, \varepsilon)$-size coreset for $P$ is a set $S$ with integer weights $w: S \to \N$ that satisfies
    \[ |\cost_\mathbf{K}(P, C) - \wcost_\mathbf{K}(S, C)| \leq \varepsilon \cost_\mathbf{K}(P, C)\]
    for all size constraints $\mathbf{K}$. In the above inequality we set
    $\wcost_\mathbf{K}(S, C) = \cost_\mathbf{K}(S', C)$,
    where $S'$ is the expanded version of $S$. That is, a point $s \in S$ is treated as $w(s)$ unit points that can all be assigned to different clusters.
\end{definition}

Size coresets are a simplification of color coresets from~\cite{SSS19} (described in Section~\ref{sec-other-constraints}).  They share the property that they are mergeable (proof in Appendix~\ref{sec-size-coresets-mergeable}).

\begin{restatable}{theorem}{thmMergeableSizeConstraints}
    \label{thm:mergeable_for_size_constraints}
    Let $(S_1, w_1)$ and $(S_2, w_2)$ be $(k, \varepsilon)$-size coresets for finite subsets $P_1$ and $P_2$ of $X$, respectively. Then then the merged set $(S, w) = (S_1 \cup S_2, w_1 + w_2)$ is a $(k,\varepsilon)$-size coreset for $P = P_1 \cup P_2$.
\end{restatable}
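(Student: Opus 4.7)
The plan is to reduce the merged statement to the single-piece coreset guarantee by decomposing any valid assignment for $P_1 \cup P_2$ into its restrictions to $P_1$ and $P_2$, and observing that each restriction must satisfy a \emph{singleton} size constraint, which is itself a valid size constraint that the individual coresets are required to preserve.

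First I would fix an arbitrary size constraint $\mathbf{K}$ for $P = P_1 \cup P_2$ and an arbitrary center set $C \subseteq X$, and define the set of compatible size-vector pairs
\[ \mathcal{M}(\mathbf{K}) = \setc{(K^{(1)}, K^{(2)}) \in \N^k \times \N^k}{\|K^{(1)}\| = |P_1|,\ \|K^{(2)}\| = |P_2|,\ K^{(1)} + K^{(2)} \in \mathbf{K}}. \]
Every assignment $\alpha : P \to C$ satisfying $\mathbf{K}$ splits as $\alpha = \alpha_1 \sqcup \alpha_2$ with $\alpha_i = \alpha|_{P_i}$, whose size vectors $K^{(i)}$ satisfy $(K^{(1)}, K^{(2)}) \in \mathcal{M}(\mathbf{K})$, and conversely every such pair of assignments yields a valid $\alpha$. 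Since $\cost(\alpha) = \cost(\alpha_1) + \cost(\alpha_2)$, this gives the decomposition
\[ \cost_\mathbf{K}(P, C) = \min_{(K^{(1)}, K^{(2)}) \in \mathcal{M}(\mathbf{K})} \bigl[ \cost_{\{K^{(1)}\}}(P_1, C) + \cost_{\{K^{(2)}\}}(P_2, C) \bigr], \]
and the analogous identity for $\wcost_\mathbf{K}(S, C) = \cost_\mathbf{K}(S_1' \cup S_2', C)$ in terms of the expanded sets $S_i'$, using that $|S_i'| = |P_i|$.

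Next, for each individual pair $(K^{(1)}, K^{(2)}) \in \mathcal{M}(\mathbf{K})$, the singletons $\{K^{(1)}\}$ and $\{K^{(2)}\}$ are themselves size constraints for $P_1$ and $P_2$ respectively. Applying the size coreset guarantee for $(S_i, w_i)$ to the constraint $\{K^{(i)}\}$ yields
\[ (1-\varepsilon) \cost_{\{K^{(i)}\}}(P_i, C) \leq \cost_{\{K^{(i)}\}}(S_i', C) \leq (1+\varepsilon) \cost_{\{K^{(i)}\}}(P_i, C) \]
for $i = 1, 2$. Summing these two inequalities gives the same $(1 \pm \varepsilon)$ sandwich between $\cost_{\{K^{(1)}\}}(P_1, C) + \cost_{\{K^{(2)}\}}(P_2, C)$ and its $S$-counterpart for every single pair in $\mathcal{M}(\mathbf{K})$.

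Finally, I would take the minimum over $\mathcal{M}(\mathbf{K})$ on both sides. For the upper bound: if $(K_*^{(1)}, K_*^{(2)})$ achieves the minimum for $P$, then the $S$-side value at that same pair is at most $(1+\varepsilon)$ times this minimum, so the minimum on the $S$-side is also at most $(1+\varepsilon) \cost_\mathbf{K}(P,C)$; the lower bound is symmetric. Combining with the decompositions above gives the required inequality $|\cost_\mathbf{K}(P, C) - \wcost_\mathbf{K}(S, C)| \leq \varepsilon \cost_\mathbf{K}(P, C)$. The main conceptual hurdle is recognizing that the strength of the definition lies in its \emph{universal quantification} over $\mathbf{K}$: the discussion of lower-bounded clustering earlier in the paper warns that preserving $\cost_\mathbf{K}$ only for one fixed $\mathbf{K}$ is not enough for mergeability, but since a size coreset preserves the cost for every size constraint including all singletons, the decomposition argument above goes through without any loss.
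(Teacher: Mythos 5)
Your proof is correct and follows essentially the same route as the paper's: both decompose a valid assignment for $P_1 \cup P_2$ into its restrictions to the two parts, observe that each restriction satisfies a singleton size constraint to which the individual coreset guarantee applies, and then combine. Your formulation as a single minimum over the set of compatible pairs $\mathcal{M}(\mathbf{K})$ is a slightly more uniform packaging of the paper's two-step argument (first a single vector $K$, then general $\mathbf{K}$), but the underlying idea and the key observation about universal quantification over size constraints are identical.
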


The benefit of Theorem~\ref{thm:mergeable_for_size_constraints} is that for \emph{any} construction of size coresets, we would directly know that they are mergeable and thus allow the usage of merge-and-reduce to obtain streaming algorithms. However, we do not make use of this here since we use movement-based coresets, for which streaming algorithms are already known. So it remains to show that movement-based coresets are indeed size coresets. The proof is similar to~\cite{SSS19}.

\begin{theorem}
    \label{thm-size-constraints}
    Let $\varepsilon \in (0, 1)$. If $(S,w)$ is a movement-based $(k, \varepsilon)$-coreset for $P$, then $(S,w)$ is also a size coreset for $P$.
\end{theorem}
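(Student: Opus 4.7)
The map $\pi:P\to S$ from Definition~\ref{def:movement_based} satisfies $|\pi^{-1}(s)|=w(s)$, so it can equivalently be read as a bijection $\pi:P\to S'$ between $P$ and the expanded version $S'$. Fix any size constraint $\mathbf{K}$ and any set of $k$ centers $C$. Every assignment $\alpha:P\to C$ induces an assignment $\alpha_\pi:S'\to C$ via $\alpha_\pi(\pi(p))=\alpha(p)$, and vice versa through $\pi^{-1}$. Since $|\alpha^{-1}(c_i)|=|\alpha_\pi^{-1}(c_i)|$ for every $i$, this bijection on assignments preserves cluster sizes and therefore preserves validity under $\mathbf{K}$. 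It thus suffices to show that corresponding assignments have approximately equal cost.

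For the cost comparison I will reuse the triangle inequality and (for $m=2$) Cauchy--Schwarz argument of Lemma~\ref{lem:movement_based_is_coreset}, but applied to a fixed valid assignment rather than to the closest-center assignment. Writing $a_p=\dist(p,\alpha(p))$ and $\delta_p=\dist(p,\pi(p))$, the inequality $\dist(\pi(p),\alpha(p))\leq a_p+\delta_p$ gives $\cost(\alpha_\pi)\leq\cost(\alpha)+M$ for $m=1$ and $\cost(\alpha_\pi)\leq\cost(\alpha)+2\sqrt{\cost(\alpha)\cdot D}+D$ for $m=2$, where $M=\sum_p\delta_p$ and $D=\sum_p\delta_p^2$. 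Definition~\ref{def:movement_based} bounds $M$ by $(\varepsilon/2)\opt_P$ and $D$ by $(\varepsilon/4)^2\opt_P$, and since $\alpha$ is a particular assignment of $P$ to $k$ centers, $\opt_P\leq\cost(\alpha)$. Substituting yields $\cost(\alpha_\pi)\leq(1+\varepsilon)\cost(\alpha)$ for $\varepsilon\in(0,1]$.

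Taking $\alpha$ to be the optimal valid assignment $\alpha^*$ for $P$ and passing to $\alpha^*_\pi$ immediately gives $\cost_\mathbf{K}(S',C)\leq(1+\varepsilon)\cost_\mathbf{K}(P,C)$. For the reverse inequality I start with the optimal valid assignment $\alpha'_*$ on $S'$ and pull it back through $\pi^{-1}$ to a valid $\alpha$ on $P$; the analogous triangle inequality gives $\cost(\alpha)\leq\cost(\alpha'_*)+$ (movement term). The \textbf{main obstacle} is that the movement bound is phrased in terms of $\opt_P$ and is not directly comparable to $\cost(\alpha'_*)$ on $S'$. I circumvent this by again using $\opt_P\leq\cost(\alpha)$ on the pulled-back assignment: for $m=1$ this rearranges cleanly to $(1-\varepsilon/2)\cost_\mathbf{K}(P,C)\leq\cost_\mathbf{K}(S',C)$, while for $m=2$ it produces a quadratic inequality in $\sqrt{\cost(\alpha)}$ and $\sqrt{\cost(\alpha'_*)}$ whose positive root yields $\cost_\mathbf{K}(P,C)\leq\cost_\mathbf{K}(S',C)/(1-\varepsilon/4)^2\leq\cost_\mathbf{K}(S',C)/(1-\varepsilon)$. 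Combining with the forward direction gives $|\cost_\mathbf{K}(P,C)-\cost_\mathbf{K}(S',C)|\leq\varepsilon\cdot\cost_\mathbf{K}(P,C)$ as required.
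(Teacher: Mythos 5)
Your proof follows essentially the same route as the paper's: both push the optimal valid assignment through the movement bijection $\pi$ (which preserves cluster sizes and hence validity under any $\mathbf{K}$), bound the resulting cost change by the triangle inequality combined with the movement budget, and close the argument using $\opt_P \le \cost(\alpha)$; your treatment of the reverse direction is in fact spelled out more carefully than the paper's one-line ``swap $S'$ and $P$''. The only caveat is that you derive the cost bound separately for $m=1$ and $m=2$, whereas the statement (and the paper's proof, via the Minkowski inequality $\|v_c+v_p\|_m\le\|v_c\|_m+\|v_p\|_m$ followed by a binomial expansion) covers arbitrary exponents $m$; your argument generalizes verbatim once the Cauchy--Schwarz step is replaced by that inequality.
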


\begin{proof}
    Let $S'$ be the expanded version of $S$. By assumption there exists a bijection $\pi: P \to S'$, such that
    \[ \sum_{x \in P} \dist(x, \pi(x))^m \leq \left(\frac{\varepsilon}{2m}\right)^m \cdot \opt_P, \]
    We show that merging an optimal valid assignment $\gamma: P \to C$ with $\pi^{-1}$ yields a valid assignment $\gamma \circ \pi^{-1}: S' \to C$ which doesn't cost much more.
    Again enumerate $P = \{ p_1, \ldots, p_n \}$ and form the vectors
    \[ v_c = (\dist(p_1, \gamma(p_1)), \ldots, \dist(p_n, \gamma(p_n)))^\top\]
    and
    \[ v_p = (\dist(p_1, \pi(p_1)), \ldots, \dist(p_n, \pi(p_n)))^\top.\]
    Then $\|v_c\|_m = \cost_\mathbf{K}(P,C)^\frac{1}{m}$ and $\|v_p\|_m \leq \frac{\varepsilon}{2m} (\opt_P)^\frac{1}{m} \leq \frac{\varepsilon}{2m}\cost(P,C)^\frac{1}{m}$ so that
    \begin{align*}
        \wcost_\mathbf{K}(S,C) & \leq \sum_{y \in S'} \dist(y, \gamma(\pi^{-1}(y)))^m = \sum_{x \in P} \dist(\pi(x), \gamma(x))^m \\
                               & \leq  \sum_{x \in P} \left(\dist(x, \gamma(x)) + \dist(x, \pi(x)) \right)^m
        = \|v_c + v_p\|_m^m \leq \left(\|v_c\|_m + \|v_p\|_m\right)^m                                                             \\
                               & = \sum_{i = 1}^m \binom{m}{i} \|v_c\|_m^{m-i} \|v_p\|_m^i
        \leq \sum_{i = 1}^m \binom{m}{i} \cost_\mathbf{K}(P,C)^\frac{m-i}{m} \cdot \frac{\varepsilon}{2m} \cost(P,C)^\frac{i}{m}  \\
                               & \leq \sum_{i = 1}^m \binom{m}{i} \cdot \frac{\varepsilon}{2m} \cdot \cost_\mathbf{K}(P,C)
        = \left(1 + \frac{\varepsilon}{2m}\right)^m \cdot \cost_\mathbf{K}(P,C)                                                   \\
                               & \leq (1 + \varepsilon) \cdot \cost_\mathbf{K}(P,C)
    \end{align*}
    In the above argumentation we can swap $S'$ and $P$ to get $\cost_\mathbf{K}(P, C) \leq (1+\varepsilon) \cdot \wcost_\mathbf{K}(S, C)$. Since $1 - \varepsilon \leq \frac{1}{1+\varepsilon}$ this shows that $(1 - \varepsilon) \cdot \cost_\mathbf{K}(P,C) \leq \wcost_\mathbf{K}(S, C)$ also holds.
\end{proof}

Notice that the argumentation in the proof of Theorem~\ref{thm-size-constraints} can be applied to any other clustering constraint that just restricts the space of allowed clusterings. We use this in the next section.

As a combination of Lemma \ref{lem:lower_bounds_are_size_constraints} and Proposition \ref{thm:mergeable_for_size_constraints} we get the following result.

\begin{corollary}
    Movement-based coresets are mergeable coresets for the constrained variants of lower bounds, upper bounds and outliers.
\end{corollary}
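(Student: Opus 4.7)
The plan is to chain together the three results that have already been established in this section. The corollary concerns three specific clustering constraints (lower bounds, upper bounds, outliers), and for each we want to argue that a movement-based coreset $(S,w)$ of $P$ behaves correctly under merging with respect to that constraint.

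First, I would invoke Lemma~\ref{lem:lower_bounds_are_size_constraints}, which shows that each of the three constraints in question can be written as a size constraint $\mathbf{K} \subseteq \mathbb{N}^k$ (or $\mathbb{N}^{k+z}$ in the outlier case). So it suffices to establish the corollary for arbitrary size constraints, since any valid instance of lower bounds / upper bounds / outliers is covered by some $\mathbf{K}$.

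Second, I would apply Theorem~\ref{thm-size-constraints}: every movement-based $(k,\varepsilon)$-coreset $(S,w)$ for a point set $P$ is already a $(k,\varepsilon)$-size coreset for $P$, i.e. it satisfies $|\cost_{\mathbf{K}}(P,C) - \wcost_{\mathbf{K}}(S,C)| \le \varepsilon \cdot \cost_{\mathbf{K}}(P,C)$ for \emph{every} size constraint $\mathbf{K}$. In particular this holds for the specific $\mathbf{K}$ produced in step one.

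Third, I would apply Theorem~\ref{thm:mergeable_for_size_constraints}: if $(S_1,w_1)$ and $(S_2,w_2)$ are $(k,\varepsilon)$-size coresets for $P_1$ and $P_2$, then the concatenation $(S_1\cup S_2, w_1+w_2)$ is a $(k,\varepsilon)$-size coreset for $P_1\cup P_2$. Combining this with the previous step, movement-based coresets of $P_1$ and $P_2$ merge to a size coreset of $P_1\cup P_2$, and hence in particular approximate $\cost_{\mathbf{K}}(P_1\cup P_2, C)$ for the size constraint $\mathbf{K}$ that encodes the lower-bound, upper-bound, or outlier variant at hand. Since the merged object is again a size coreset for any such $\mathbf{K}$, it approximates the constrained cost correctly for all three variants, which is exactly the statement of the corollary.

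There is no real obstacle here: the corollary is essentially a bookkeeping statement that composes Lemma~\ref{lem:lower_bounds_are_size_constraints}, Theorem~\ref{thm-size-constraints}, and Theorem~\ref{thm:mergeable_for_size_constraints}. The only minor subtlety worth mentioning explicitly in the proof is that the quantification over size constraints in the definition of size coresets is what makes the merging work across data splits — the constraint $\mathbf{K}$ for $P = P_1 \cup P_2$ need not restrict naturally to valid constraints on $P_1$ and $P_2$ individually (as the lower-bound example in the introduction illustrates), yet the size-coreset guarantee sidesteps this exactly because it holds uniformly over all $\mathbf{K}$.
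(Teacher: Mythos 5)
Your proof is correct and follows exactly the paper's intended route: the paper derives this corollary by combining Lemma~\ref{lem:lower_bounds_are_size_constraints} (the three constraints are size constraints), Theorem~\ref{thm-size-constraints} (movement-based coresets are size coresets), and Theorem~\ref{thm:mergeable_for_size_constraints} (size coresets are mergeable). Your closing remark about the uniform quantification over all size constraints being what makes merging work is a nice articulation of the point the paper makes informally with its lower-bound example in the introduction.
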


\section{Coresets for other clustering constraints}\label{sec-other-constraints}

The previous section showed using a simplified analysis of~\cite{SSS19} that movement-based coresets can be used when dealing with size constraints. In this section it will be shown that~\cite{SSS19} actually also works for other clustering variants.
To explain it, we need a following generalization of $k$-clustering. It will be used to model constraints, however, it is not as intuitive to use as size constraints are. The generalization is based on assigning \emph{colors} to all points in $P$. The motivation for this comes from fair clustering, where colors are used to model sensitive attributes. Basically, the constraints we will now cover are about the distribution of attributes over clusterings. Assuming that points can be colored allows to model constraints beyond the fairness constraints that were considered in~\cite{SSS19}.

So from now on we assume that we additionally get a coloring $f: P \rightarrow \{1,\dots,\ell\}$ of our point set $P$ with $\ell$ colors. A new type of constraint introduced in~\cite{SSS19} are color constraints. Basically these are size constraints for points of the same color. That is, a color constraint declares which color distributions of clusterings are considered to be valid. It tells exactly how many points of color $j$ are to be contained in cluster $i$.

\begin{definition}
    \label{def:color-constraint}
    A \emph{color constraint} for $P$ is a set of matrices $\mathbf{K} \subset \mathbb{N}^{k \times \ell}$ with $\sum_i K_{ij} = |f^{-1}(j)|$ for all $j = 1, \ldots, \ell$. A clustering $C_1, \ldots, C_k$ of $P$ satisfies the color constraint $\mathbf{K}$ if there exists a matrix $K \in \mathbf{K}$ such that
    \[ |f^{-1}(j) \cap C_i| = |\{x \in C_i \; | \; f(x) = j\}| = K_{ij}.\]
\end{definition}

Just like in the previous section with size constraints we may pose a variant of $k$-clustering based on these new color constraints. Additionally to a colored input set we get a color constrait and are supposed to find an assignment of minimum cost that satisfies the constraint. Obviously the color constraint can be quite large, but they do not actually have to be part of the input.

\begin{definition}
    \label{def:color-constrained-clustering}
    In a color constrained $k$-clustering problem we are additionally give a color constraint $\mathbf{K}$ for the colored set $P$ and the goal is to find a set of $k$ centers $C \subseteq X$ that minimizes
    \[ \cost_{\mathbf{K}}(P, C) = \min_\alpha \sum_{x in P} \dist(x, \alpha(x))^m, \]
    where $\alpha$ ranges over all assignments of $P$ to some set of $k$ centers that satisfy $\mathbf{K}$.
\end{definition}

The definition of movement-based coresets from the previous section can naturally incorporate colors. Basically one makes it a multiset, so that it can contain up to $\ell$ copies with different weights for every point. That is, if we view the movement-based coreset as the result of merging nearby points, then this time around we save for every color the numbers of points of that color that were merged.

\begin{definition}
    \label{def:movement_based_for_colors}
    A weighted multiset $(S, w)$ with colors $f': S \to \{1, \ldots, \ell\}$ is a movement-based $(k, \ell, \varepsilon)$-color coreset for the color input set $P$ if
    \begin{itemize}
        \item the weights $w$ take on positive integer values,

        \item there exists a mapping $\pi: P \to S$ such that
              \[ \sum_{x \in P} \dist(x, \pi(x))^m \leq \left(\frac{\varepsilon}{2m}\right)^m \opt_P\]
              holds and $|\pi^{-1}(s) \cap f^{-1}(i)| = w(s)$ for all points $s \in S$ of color $i$ $(1 \leq i \leq \ell)$.
    \end{itemize}
\end{definition}

Similarly to movement-based coresets from the previous section these assumptions imply a certain type of coreset defined \cite{SSS19} and defined below.

\begin{definition}[\cite{SSS19}]
    \label{def:color-coreset}
    A non-negatively integer weighted set $S \subseteq \mathbb{R}^d$ with a coloring $f':S \rightarrow \{1,\dots,\ell\}$ is a $(k,\varepsilon)$-color-coreset for $P$, if for every set $C \subseteq \mathbb{R}^d$ of $k$ centers and every color constraint $K$ we have
    \[
        |\wcost_\mathbf{K}(S,C) - \cost_\mathbf{K}(P,C)| \leq \varepsilon \cdot \cost_\mathbf{K}(P,C).
    \]
\end{definition}

Essentially, a color constraint approximately preserves the cost of an optimal assignment given a arbitrary set of $k$ centers and an arbitrary color assignment. If there is only one color involved, then these are just the same as size coresets. It is shown in ~\cite{SSS19} that color coresets are mergeable for $k$-means.

\begin{corollary}
    Color coresets are mergeable.
\end{corollary}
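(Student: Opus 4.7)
The plan is to prove mergeability by reducing to the color-coreset guarantee applied separately to $S_1$ and $S_2$, via a decomposition of any color-constrained assignment of $P = P_1 \cup P_2$ into its restrictions to $P_1$ and $P_2$. First I would fix an arbitrary set $C$ of $k$ centers and an arbitrary color constraint $\mathbf{K} \subseteq \mathbb{N}^{k \times \ell}$ for $P$, and let $(S, w) = (S_1 \cup S_2, w_1 + w_2)$ denote the merged weighted set, treating it as a multiset so that a point appearing in both summands has its weights added. Since the expanded version $S'$ of $S$ is the multiset union $S_1' \sqcup S_2'$, any assignment $\beta : S' \to C$ decomposes uniquely into $\beta_i = \beta|_{S_i'}$, and symmetrically any assignment $\alpha : P \to C$ decomposes into $\alpha_i = \alpha|_{P_i}$.

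The key structural observation is that an assignment $\alpha$ satisfies $\mathbf{K}$ if and only if there exist matrices $K^{(1)}, K^{(2)} \in \mathbb{N}^{k \times \ell}$ with $K^{(1)} + K^{(2)} \in \mathbf{K}$ such that $\alpha_i$ realizes the color-size matrix $K^{(i)}$ on $P_i$ for $i = 1, 2$. Writing $\mathcal{D}(\mathbf{K})$ for the set of such decompositions, this immediately yields
\[
    \cost_\mathbf{K}(P, C) = \min_{(K^{(1)}, K^{(2)}) \in \mathcal{D}(\mathbf{K})} \Bigl( \cost_{\{K^{(1)}\}}(P_1, C) + \cost_{\{K^{(2)}\}}(P_2, C) \Bigr),
\]
and the exact same decomposition applies to the expanded multiset $S' = S_1' \sqcup S_2'$, so the analogous identity holds for $\wcost_\mathbf{K}(S, C)$. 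Note that $(K^{(1)}, K^{(2)}) \in \mathcal{D}(\mathbf{K})$ is a valid (singleton) color constraint on $P_i$ precisely because the column sums of $K^{(i)}$ match the color counts of $P_i$ when $\alpha_i$ attains it, and similarly for $S_i'$ since $w_i$ preserves per-color cardinalities by Definition~\ref{def:color-coreset}.

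Now I would apply the color-coreset guarantee to each piece: for every single-matrix constraint $\{K^{(i)}\}$,
\[
    (1 - \varepsilon)\, \cost_{\{K^{(i)}\}}(P_i, C) \,\leq\, \wcost_{\{K^{(i)}\}}(S_i, C) \,\leq\, (1 + \varepsilon)\, \cost_{\{K^{(i)}\}}(P_i, C).
\]
Adding these for $i = 1, 2$ shows that for every fixed decomposition $(K^{(1)}, K^{(2)}) \in \mathcal{D}(\mathbf{K})$, the sum over $S_1$ and $S_2$ is a $(1 \pm \varepsilon)$-approximation of the sum over $P_1$ and $P_2$.

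Finally, I would take the minimum over $\mathcal{D}(\mathbf{K})$ on both sides. For the upper bound direction, plug in the decomposition achieving $\cost_\mathbf{K}(P, C)$ to get $\wcost_\mathbf{K}(S, C) \leq (1+\varepsilon)\, \cost_\mathbf{K}(P, C)$; for the lower bound, plug in the decomposition achieving $\wcost_\mathbf{K}(S, C)$ to get $\cost_\mathbf{K}(P, C) \leq (1+\varepsilon)\, \wcost_\mathbf{K}(S, C)$, which rearranges to $\wcost_\mathbf{K}(S, C) \geq (1-\varepsilon)\, \cost_\mathbf{K}(P, C)$ using $1/(1+\varepsilon) \geq 1-\varepsilon$, exactly as in the proof of Theorem~\ref{thm-size-constraints}. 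The main subtlety that needs care is the decomposition identity above: one must check both that restricting a globally valid assignment yields assignments realizing matrices that sum to a matrix in $\mathbf{K}$, and conversely that two locally optimal assignments for $(K^{(1)}, K^{(2)}) \in \mathcal{D}(\mathbf{K})$ can be concatenated without violating the global constraint. This is the step that genuinely uses the matrix (rather than merely scalar) form of color constraints and the color-preserving property of the coreset mapping $\pi$.
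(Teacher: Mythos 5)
Your argument is correct and is essentially the paper's own approach: the paper handles this corollary only by deferring to \cite{SSS19} and to the size-coreset analogue (Theorem~\ref{thm:mergeable_for_size_constraints}, proved in Appendix~\ref{sec-size-coresets-mergeable}), and that proof is exactly your scheme --- restrict an optimal constrained assignment to $P_1$ and $P_2$, apply the per-part coreset guarantee for the induced singleton constraints, and recombine --- with your only change being to package the singleton and general cases into a single min-over-decompositions identity over $\mathcal{D}(\mathbf{K})$. One cosmetic remark: in the lower-bound direction you can simply sum $(1-\varepsilon)\cost_{\{K^{(i)}\}}(P_i,C)\le\wcost_{\{K^{(i)}\}}(S_i,C)$ over $i=1,2$ for the decomposition realizing $\wcost_\mathbf{K}(S,C)$, which gives $(1-\varepsilon)\cost_\mathbf{K}(P,C)\le\wcost_\mathbf{K}(S,C)$ directly and avoids having to justify the intermediate bound $\cost_\mathbf{K}(P,C)\le(1+\varepsilon)\wcost_\mathbf{K}(S,C)$, which does not follow from Definition~\ref{def:color-coreset} as stated (it yields $1/(1-\varepsilon)$, not $1+\varepsilon$).
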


Just like with movement-based coresets in the previous section one can show that movement-based $(k, \ell, \varepsilon)$-color coresets are coresets that respect all color constraints involving $\ell$ colors. This is proven in~\cite{SSS19} for $k$-means, and the proof for the general case is basically the same as the proof of~Theorem\ref{thm-size-constraints}.

\begin{corollary}
    A movement-based color coreset is a color coreset.
\end{corollary}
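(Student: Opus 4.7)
The plan is to mirror the proof of Theorem~\ref{thm-size-constraints} essentially line by line, with one new ingredient: the bijection to the expanded multiset must be interpreted as being \emph{color-preserving}. First, I would form the expanded multiset $S'$ of $S$ by replacing each $s \in S$ of color $i$ with $w(s)$ copies, each inheriting color $i$. The condition $|\pi^{-1}(s) \cap f^{-1}(i)| = w(s)$ in Definition~\ref{def:movement_based_for_colors} then lets me reinterpret $\pi$ as a bijection $\pi: P \to S'$ that satisfies $f'(\pi(x)) = f(x)$ for every $x \in P$.

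Next, fix an arbitrary center set $C \subseteq \mathbb{R}^d$ and an arbitrary color constraint $\mathbf{K}$, and let $\gamma: P \to C$ realize $\cost_{\mathbf{K}}(P,C)$. I would consider the induced assignment $\gamma \circ \pi^{-1}: S' \to C$. Because $\pi$ is color-preserving, we have $|(\gamma \circ \pi^{-1})^{-1}(c_i) \cap f'^{-1}(j)| = |\gamma^{-1}(c_i) \cap f^{-1}(j)|$ for every center $c_i$ and every color $j$. Hence $\gamma \circ \pi^{-1}$ satisfies the same matrix from $\mathbf{K}$ that $\gamma$ does, so it is a valid assignment, and in particular $\wcost_{\mathbf{K}}(S,C) \leq \cost(\gamma \circ \pi^{-1})$.

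From here the argument is identical to Theorem~\ref{thm-size-constraints}: enumerate $P = \{p_1, \dots, p_n\}$, form the vectors $v_c$ and $v_p$ whose entries are $\dist(p_i, \gamma(p_i))$ and $\dist(p_i, \pi(p_i))$, observe $\|v_c\|_m = \cost_{\mathbf{K}}(P,C)^{1/m}$ and $\|v_p\|_m \leq (\varepsilon/2m) \cdot \opt_P^{1/m} \leq (\varepsilon/2m) \cdot \cost_{\mathbf{K}}(P,C)^{1/m}$ (the second inequality uses $\opt_P \leq \cost(P,C) \leq \cost_{\mathbf{K}}(P,C)$), apply the triangle and Minkowski inequalities followed by a binomial expansion, and conclude $\wcost_{\mathbf{K}}(S,C) \leq (1+\varepsilon) \cdot \cost_{\mathbf{K}}(P,C)$. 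The reverse direction is obtained by swapping the roles of $P$ and $S'$ via the color-preserving bijection $\pi^{-1}$, starting from an optimal valid assignment of $S'$ to $C$; the $1-\varepsilon \leq 1/(1+\varepsilon)$ step then yields the required two-sided bound from Definition~\ref{def:color-coreset}.

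The main obstacle, and the only genuinely new point compared to the size-constraint proof, is the validity transfer in the second paragraph: one must check that composing with $\pi^{\pm 1}$ maps valid color-constrained assignments to valid color-constrained assignments in both directions. This is precisely why Definition~\ref{def:movement_based_for_colors} demands the count $|\pi^{-1}(s) \cap f^{-1}(i)| = w(s)$ for each color $i$ separately, rather than just a total count. Once that is established, the remainder of the argument is a purely mechanical rewrite of the calculation in Theorem~\ref{thm-size-constraints} with $\cost_{\mathbf{K}}$ replaced by its color-constrained counterpart.
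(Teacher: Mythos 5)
Your proposal is correct and follows essentially the same route as the paper, which proves this corollary by deferring to the argument of Theorem~\ref{thm-size-constraints} (and to~\cite{SSS19}): expand $S$ into the multiset $S'$, use the movement bound with the Minkowski/binomial estimate, and swap the roles of $P$ and $S'$ for the reverse inequality. The one point you make explicit --- that the per-color condition $|\pi^{-1}(s)\cap f^{-1}(i)|=w(s)$ makes $\pi$ a color-preserving bijection, so that composing with $\pi^{\pm1}$ carries assignments satisfying a matrix $K\in\mathbf{K}$ to assignments satisfying the same $K$ --- is exactly the detail the paper leaves implicit, and you identify correctly that it is the only genuinely new ingredient.
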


Importantly, any constructions that yield movement-based coresets can easily be adjusted to yield movement-based color coresets. When the original construction would merge a point set $P'$ into a point $p$ the new construction can instead just add $\ell$ copies $p$, where the weight assigned to copy $i$ is $|P' \cap f^{-1}(i)|$. This just increases the coreset size by a multiplicative factor of $O(\ell)$. In fact, the constructions mentioned in Theorem \ref{thm:known-sizes-coresets} can be adjusted in this manner, which implies  the following.

\begin{corollary}
    \label{thm:size-color-coresets}
    There are streaming constructions that yield movement-based $(k, \ell, \varepsilon)$-color coresets of size $O(\ell k \varepsilon^{-d} \log^2 n)$ for $k$-median and $O(\ell k \varepsilon^{-d} \log n)$ for $k$-means with runtimes that are polynomial in $n,k,\ell$ and the spread $\log \Delta$ of $P$ and exponential in $d$.
\end{corollary}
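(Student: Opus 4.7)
The plan is to apply the colorization recipe described immediately before the statement as a black-box transformation to the streaming movement-based constructions guaranteed by Theorem~\ref{thm:known-sizes-coresets}. Recall that such a construction produces a weighted set $(S,w)$ together with a mapping $\pi: P \to S$ satisfying Definition~\ref{def:movement_based}, where the weight $w(s)$ of a coreset point $s$ equals $|\pi^{-1}(s)|$. I would modify the output as follows: for each point $s \in S$ and each color $i \in \{1,\dots,\ell\}$ with $|\pi^{-1}(s) \cap f^{-1}(i)| > 0$, emit one copy $s^{(i)}$ of $s$ into the new multiset $\tilde S$, assign it color $f'(s^{(i)}) = i$, and set the new weight to $\tilde w(s^{(i)}) = |\pi^{-1}(s) \cap f^{-1}(i)|$. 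Define $\tilde\pi: P \to \tilde S$ by $\tilde\pi(x) = \big(\pi(x)\big)^{(f(x))}$.

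Verifying Definition~\ref{def:movement_based_for_colors} is straightforward: the weights $\tilde w$ take positive integer values by construction; the color compatibility condition $|\tilde\pi^{-1}(s^{(i)}) \cap f^{-1}(i)| = \tilde w(s^{(i)})$ holds by the definition of $\tilde\pi$; and the movement cost is preserved because $\dist(x,\tilde\pi(x)) = \dist(x,\pi(x))$ for every $x \in P$, so the bound $\sum_{x\in P} \dist(x,\tilde\pi(x))^m \leq (\varepsilon/2m)^m\,\opt_P$ is inherited verbatim from the original construction.

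For the size, each original coreset point is replaced by at most $\ell$ colored copies, so the size bounds of Theorem~\ref{thm:known-sizes-coresets} get multiplied by $O(\ell)$, yielding $O(\ell k \varepsilon^{-d} \log^2 n)$ for $k$-median and $\tilde O(\ell k \varepsilon^{-d} \log n)$ for $k$-means. For the runtime, the colorization can be implemented by maintaining, alongside each internal weighted representative of the base algorithm, an $\ell$-dimensional count vector recording how many of its represented input points have each color. On a merge step, the vectors are summed; on a reduce step, the base algorithm's rule for assigning an input weight to an output representative is applied component-wise across the $\ell$ counters. This is a purely local bookkeeping change that multiplies the per-step cost by $O(\ell)$ and preserves the merge-and-reduce skeleton, so the total runtime remains polynomial in $n$, $k$, $\ell$ and $\log \Delta$ and exponential in $d$.

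The only point that could look like an obstacle is ensuring the streaming character survives the modification, but since both base constructions cited in Theorem~\ref{thm:known-sizes-coresets} operate via merge-and-reduce and the color counters compose additively under merges and transfer deterministically under reductions, the streaming bounds (space and update time) carry over with the same $O(\ell)$ blowup. Hence the corollary follows directly, with no genuinely new analysis beyond applying the proof of Theorem~\ref{thm-size-constraints} color-wise, which is exactly the observation already made in~\cite{SSS19}.
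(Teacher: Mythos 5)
Your proposal is correct and follows exactly the route the paper takes: the paper's entire justification is the preceding remark that a movement-based construction can be colorized by splitting each merged representative into at most $\ell$ colored copies weighted by per-color counts, which blows up size and runtime by a factor of $O(\ell)$ while leaving the movement cost untouched. Your write-up is in fact more careful than the paper's sketch, since you explicitly verify the conditions of Definition~\ref{def:movement_based_for_colors} and explain how the color counters compose through merge-and-reduce.
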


Now that this has been established, we see what we can do with it. First note that any clustering constraint (that restricts the space of allowed clusterings) can be modeled as a color constraint if we assume the points to be colored in some specific way. This assumption on the coloring of the input space is not unreasonable. In fact, this naturally arises in several practical applications. More will be said about this further down.

\begin{observation}
    Every clustering constraint (that restricts the space of allowed clusterings) can be encoded by a color constraint if an appropriate coloring is chosen.
\end{observation}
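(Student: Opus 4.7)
The plan is to exhibit, for any clustering constraint---viewed as a family $\mathcal{F}$ of allowed partitionings $(C_1,\ldots,C_k)$ of $P$---both a coloring $f$ and a set of matrices $\mathbf{K}$ such that the partitionings of $P$ satisfying $\mathbf{K}$ in the sense of Definition~\ref{def:color-constraint} are exactly those in $\mathcal{F}$. The coloring I would use is the most fine-grained possible: enumerate $P = \{p_1, \ldots, p_n\}$, set $\ell := n$, and let $f: P \to \{1, \ldots, n\}$ be any bijection, so that $|f^{-1}(j)| = 1$ for every color $j$.

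With this choice, the column-sum condition $\sum_{i=1}^k K_{ij} = |f^{-1}(j)| = 1$ required by Definition~\ref{def:color-constraint} forces any admissible matrix $K \in \mathbb{N}^{k \times n}$ to have entries in $\{0,1\}$ with exactly one $1$ per column. Such matrices are in natural bijection with functions $\pi: P \to \{1,\ldots,k\}$ via $K^{\pi}_{ij} := \mathbf{1}[\pi(p_j) = i]$, and each function $\pi$ is in turn in bijection with a partitioning $(\pi^{-1}(1), \ldots, \pi^{-1}(k))$ of $P$ into $k$ (possibly empty) clusters.

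Given the clustering constraint $\mathcal{F}$, I would therefore define $\mathbf{K} := \{\, K^{\pi} : (\pi^{-1}(1),\ldots,\pi^{-1}(k)) \in \mathcal{F}\,\}$, which is a valid color constraint because each $K^\pi$ satisfies the column-sum condition. The verification of correctness is direct: a clustering $(C_1,\ldots,C_k)$ satisfies $\mathbf{K}$ iff some $K^\pi \in \mathbf{K}$ fulfills $|f^{-1}(j) \cap C_i| = K^\pi_{ij}$ for all $i,j$. Since $f^{-1}(j) = \{p_j\}$, the left-hand side lies in $\{0,1\}$ and the condition simplifies to $p_j \in C_i \iff \pi(p_j) = i$. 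Hence $(C_1,\ldots,C_k)$ coincides with the partitioning encoded by $\pi$, which lies in $\mathcal{F}$ by construction; the converse inclusion is immediate, since any partitioning in $\mathcal{F}$ arises from some such $\pi$ and witnesses satisfaction of $\mathbf{K}$.

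There is no substantive obstacle: the statement is a structural observation, and the only subtlety is ensuring that the column-sum requirement of Definition~\ref{def:color-constraint} is met, which is automatic for the unit-color coloring. The price paid is that $\ell = n$ and $|\mathbf{K}|$ can be as large as $|\mathcal{F}|$, but since $\mathbf{K}$ is never explicitly enumerated or stored by the coreset construction (only used in the analysis), this does not affect the applicability of movement-based color coresets to arbitrary clustering constraints.
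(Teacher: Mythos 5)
Your proposal is correct and uses exactly the paper's idea: assign every point its own color, so that the column-sum condition forces each admissible matrix to specify the cluster of each individual point, and then take $\mathbf{K}$ to be the matrices corresponding to the allowed partitionings. Your write-up simply makes explicit the bijection that the paper's one-line proof leaves implicit.
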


\begin{proof}
    If one colors all points in a different color, then a color constraint tells us for every individual point exactly where it has to be put.
\end{proof}

Obviously assuming that every point has a different color is useless and any movement-based color coreset will be just as large as the input set which it is supposed to approximate. However, if we consider any specific clustering constraint then we may be able to work with much less colors. With that let us focus on more specific constraints (also to see that the assumption of colored points is not unreasonable).

\begin{example}
    In $k$-clustering problems with must-link constraints we are given a set of links $L \subset \binom{P}{2}$ between points in $P$. The restriction imposed is that any two points with a link between them have to be placed in the same cluster. Notice that this forms a graph where all points in the same connected component have to be put into the same cluster.
    Now, one can consider all pairs of points that form a link to have the same color. Every point in a non-singular connected component thus has the color and every connected component uses a different color $1 \leq i \leq \ell - 1$. The remaining points may share the same color $\ell$. Then the must-link constraint can just be described as the color constraint $\mathbf{K}$ that consists of all matrices with at most 1 non-zero entry in the first $l-1$ columns. The number of colors can be bounded by the number of must-link constraints (and can be smaller if there are many connected components).
\end{example}

Depending on the application such a coloring of the points may already be given as an encoding of different properties/attributes that was additionally measured. The links may then just describe various compatibility constraints of these colors (read properties). That two points must be placed in the same cluster does not necessarily have to be due to some singular reason that only concerns these exact two points. It might rather be the result of the points sharing specific colors (attribute). Colors are a much more direct and effective way of encoding such information. An algorithm used in practice that goes through every cannot-link to check whether a given clustering violates it is unlikely. Usually there should be a more efficient procedure so that some of the information contained in the cannot-links can be abstracted from.

Must-link constraints are still relatively simply to hande, but cannot-link constraints require more effort and may need more colors.

\begin{example}
    Cannot-link constraints are similar. A cannot-link constraint is also a set of links $L \subset \binom{P}{2}$ between points in $P$, but this time they tell us which points cannot lie in the same cluster. If there is a link between two points they have to lie in different clusters. Consider two points that share a link to have different colors and the remaining (not linked) points to share a color. Let $\mathbf{K}$ consist of all matrices $K \in \mathbb{N}^{k \times l}$ that satisfy the following: If there is a cannot-link between a point of color $i$ and a point of color $j$, then in every row of $K$ the $i$-th or the $j$-th position has to be zero. The cannot-link constraint then is completely encoded in $\mathbf{K}$.
\end{example}

The coloring $f$ outlined above is almost certainly not optimal, especially, when the cannot-links are to be considered to be between groups of points and not just between individual points. Again, such a coloring might already be provided in the form of attributes. If the cannot-links are transitive then the constraints coincides with the chromatic constraint introduced in \cite{DX11} and described further below.

There are also quite a few constraints that are directly based on colors (attributes) and their distributions. All of these can be described by color constraints and so color coresets may find their use.
One example which is extensively discussed in~\cite{SSS19} are fairness constraints.
Another special variant of color constraints include the chromatic constraints introduced by Ding and Xu~\cite{DX11}.

\begin{example}\label{example:chromatic}
    Chromatic constraints model a behavior where points of the same color are incompatible with one another. A clustering $C_1, \ldots, C_k$ is valid, if
    \[ |f^{-1}(j) \cap C_i| = 1 \]
    for all $i$ and $j$. This is just the same as imposing the color constraint that consists of all matrices $K$ with entries in $\{0, 1\}$ only.
\end{example}

As a last example consider the $l$-diversity constraints mentioned in \cite{DX15}.

\begin{example}\label{example:diversity}
    A clustering $\{C_1, \ldots, C_k\}$ of $P$ satisfies the $l$-diversity constraint, if
    \[ \frac{|f^{-1}(j) \cap C_i|}{|C_i|} \leq \frac{1}{l} \]
    for all colors $j$ and all indices $i$. This restriction is the same as imposing the color constraint $\mathbf{K}$ that consists of all matrices $K$ with
    \[ \frac{K_{ij}}{\sum_{j'} K_{ij'}} \leq \frac{1}{l}\]
    for all $i$ and $j$.
\end{example}

Since all of the mentioned constraints can be modeled as color constraints, it is possible to compute a color coreset for any of the corresponding constrained $k$-median/$k$-means problems by using Corollary~\ref{thm:size-color-coresets}. In the final section, we give (standard) pointers on how to use the resulting coreset to approximate the constrained clustering problem.

\section{Algorithms for constrained clustering problems}

A coreset allows us to transfer approximation algorithms to the streaming setting. Any approximation algorithm for a constrained clustering problem that can handle weighted points can be run on the coreset to obtain an approximative solution for $p$. Here is a standard way to get an approximation result.

\begin{lemma}
    Let $\varepsilon \in (0,1)$.
    Computing an $\alpha$-approximation for a color constrained $k$-clustering problem on a $(k,\varepsilon/3)$-color coreset $(S,w,f')$ for input $(P,f,\mathbf{K},k)$ yields an $(\alpha+\varepsilon)$-approximation.
\end{lemma}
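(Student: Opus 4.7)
The plan is the standard coreset-to-approximation reduction, using the color-coreset guarantee of Definition~\ref{def:color-coreset} twice: once to transfer the optimum from $P$ to $S$, and once to transfer the computed solution back from $S$ to $P$. Concretely, I would let $C^{*}$ denote an optimal solution for the color-constrained instance on $P$ with cost $\opt_{P}=\cost_{\mathbf{K}}(P,C^{*})$, and let $\tilde C$ denote the solution output by the $\alpha$-approximation algorithm when it is run on the weighted, colored input $(S,w,f')$ under the same color constraint $\mathbf{K}$. Since the approximation algorithm operates on the coreset, we have $\wcost_{\mathbf{K}}(S,\tilde C)\le \alpha\cdot \min_{C}\wcost_{\mathbf{K}}(S,C)\le \alpha\cdot \wcost_{\mathbf{K}}(S,C^{*})$.

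The key step is then to sandwich both weighted cost terms by their unweighted counterparts using the $(k,\varepsilon/3)$-color-coreset property. On the one hand, $\wcost_{\mathbf{K}}(S,C^{*})\le (1+\varepsilon/3)\cost_{\mathbf{K}}(P,C^{*}) = (1+\varepsilon/3)\,\opt_{P}$. On the other hand, for the computed solution $\tilde C$ the lower bound in Definition~\ref{def:color-coreset} gives $(1-\varepsilon/3)\cost_{\mathbf{K}}(P,\tilde C)\le \wcost_{\mathbf{K}}(S,\tilde C)$, hence $\cost_{\mathbf{K}}(P,\tilde C)\le \wcost_{\mathbf{K}}(S,\tilde C)/(1-\varepsilon/3)$. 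Chaining these three inequalities yields
\[
\cost_{\mathbf{K}}(P,\tilde C)\;\le\;\alpha\cdot\frac{1+\varepsilon/3}{1-\varepsilon/3}\cdot \opt_{P}.
\]

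Finally, I would dispose of the ratio with elementary algebra: for $\varepsilon\in(0,1)$ we have $1-\varepsilon/3\ge 2/3$, so $\frac{1+\varepsilon/3}{1-\varepsilon/3}=1+\frac{2\varepsilon/3}{1-\varepsilon/3}\le 1+\varepsilon$, giving $\cost_{\mathbf{K}}(P,\tilde C)\le \alpha(1+\varepsilon)\,\opt_{P}$, which matches the stated $(\alpha+\varepsilon)$-guarantee in the usual reading where $\varepsilon$ absorbs the constant $\alpha$ (equivalently, one may run the argument with $\varepsilon/(3\alpha)$ in place of $\varepsilon/3$ to obtain the literal bound $\alpha+\varepsilon$). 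There is no real obstacle here: the only thing to be careful about is that the $\alpha$-approximation algorithm must be able to handle weighted point sets (since $S$ is weighted) and must respect the color constraint $\mathbf{K}$ when interpreting a weighted point $s$ as $w(s)$ individually assignable unit points of color $f'(s)$, exactly as required by the definition of $\wcost_{\mathbf{K}}$.
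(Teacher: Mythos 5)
Your proof is correct and follows essentially the same route as the paper: bound the coreset optimum via $\wcost_{\mathbf{K}}(S,C^{*})\le(1+\varepsilon/3)\cost_{\mathbf{K}}(P,C^{*})$, apply the $\alpha$-approximation on $S$, and translate back with the factor $1/(1-\varepsilon/3)$. You are in fact slightly more careful than the paper at the end, since the paper also only derives $\alpha(1+\varepsilon)$ and your remark about rescaling $\varepsilon$ by $\alpha$ is exactly what is needed to obtain the literal $(\alpha+\varepsilon)$ bound.
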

\begin{proof}
    Let $C^\ast$ be the optimum solution for the color constrained $k$-clustering problem on $(P,f,\mathbf{K},k)$ and set $\varepsilon'=\varepsilon/3$. By Definition~\ref{def:movement_based_for_colors},
    $\wcost_\mathbf{K}(S,C^\ast) \le  (1+\varepsilon')\cost_\mathbf{K}(P,C^\ast)$, so this is an upper bound on the cost of an optimal solution on the coreset. This means that an $\alpha$-approximate solution $C^+$ on $(S,w,f',\mathbf{K},k)$ satisfies that $\wcost_\mathbf{K}(S,C^+) \le \alpha (1+\varepsilon')\cost_\mathbf{K}(P,C^\ast)$. By Definition~\ref{def:movement_based_for_colors},
    \[
        \cost_\mathbf{K}(P,C^+) \le \wcost_\mathbf{K}(S,C^+) + \varepsilon' \cdot \cost_\mathbf{K}(P,C^+)
        \Leftrightarrow
        \cost_\mathbf{K}(P,C^+) \le \frac{1}{1-\varepsilon'} \cdot \wcost_\mathbf{K}(S,C^+),
    \]
    so we have that $\cost_\mathbf{K}(P,C^+) \le \frac{1}{1-\varepsilon'} \cdot\alpha (1+\varepsilon')\cost_\mathbf{K}(P,C^\ast)$, which is at most $\alpha(1+3\varepsilon')\cost_\mathbf{K}(P,C^\ast)$ when assuming $\varepsilon' \le 1/2 \leftrightarrow \varepsilon \le 3/2$ which is clearly satisfied. Since $3\varepsilon'=\varepsilon$, the claim is proven.
\end{proof}

There are not overly many constant-factor approximation algorithms for constrained $k$-clustering known. For example, an algorithm for lower-bounded facility location by Svitkina~\cite{S10} can be probably adapted to work for $k$-median (as stated by Ahmadian and Swamy~\cite{AS16}). After doing this adaptation and also adapting it to work for weights, one can use it together with the coreset to obtain a constant-factor approximation in very fast time in the streaming model. (This comes at the cost of an exponential dependence on $d$ in the space requirement and running time). Another algorithm providing a constant-factor approximation is for $k$-median with outliers due to  Krishnaswamy, Li and Sandeep~\cite{KLS18}.

A fairly general method to obtain an approximation from the coreset is to design a polynomial approximation scheme (PTAS) which iterates through all possible solutions on the coreset. This introduces an exponential dependence on the number of centers $k$ (in addition to the exponential dependence on $d$ which movement-based coresets imply), yet at least it is applicable to all possible constraints. We outline such an algorithm for the case of $k$-means in Section~\ref{appendix:inaba} by using a well-known PTAS technique for coresets going back to Inaba et al. \cite{IKI94} which works in the general setting of constraints. Similarly to Theorem~\ref{thm-bhat}, one needs an efficient algorithm that computes an assignment of the input points to a test set of centers. Of course, alternatively, one can also work to adapt one of the proposed PTASes for constrained clustering~\cite{BJK18,DX11} to weighted points and combine it with the coreset to obtain the same result.

\begin{restatable}{theorem}{ptasInaba}
    There exists a PTAS (for $\varepsilon \in (0,1)$) which works in the streaming model for color constrained $k$-means clustering problems with $\ell$ colors that admit efficient optimal assignment algorithms, assuming that $d$ and $k$ are constant.
\end{restatable}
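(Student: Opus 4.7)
The plan is to combine the streaming color coreset construction from Corollary~\ref{thm:size-color-coresets} with an offline PTAS that is executed on the coreset once the stream has been fully consumed. During the pass I would maintain a $(k,\ell,\varepsilon/3)$-color coreset $(S,w,f')$ via merge-and-reduce, of size $\tilde O(\ell k \varepsilon^{-d} \log n)$. By the approximation-preservation lemma stated immediately above, an $\alpha$-approximation of the color-constrained problem on $(S,w,f')$ yields an $(\alpha + \varepsilon)$-approximation on $P$, so it suffices to build a $(1+\varepsilon/2)$-approximation on the weighted coreset and absorb constants at the end.

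For the offline PTAS on the weighted coreset I would use the sampling-based technique of Inaba, Katoh and Imai~\cite{IKI94}: for $k$-means, the centroid of a uniform sample of $\Theta(1/\varepsilon)$ points drawn with repetition from a set $C$ lies within a $(1+\varepsilon)$-factor of the true centroid in $k$-means cost with constant probability. Since the partition of the expanded coreset $S'$ induced by the unknown color-constrained optimum is not available to us, I would enumerate all $k$-tuples of $\Theta(1/\varepsilon)$-sized samples drawn from the support $S$ with probabilities proportional to the integer weights $w$; each tuple defines a candidate center set $C$ consisting of the weighted sample centroids. Because sampling is from $S$ (rather than from the much larger $S'$), the number of candidate sets is at most $|S|^{O(k/\varepsilon)}$, which is polynomial in $n$ for constant $k$, $d$, $\varepsilon$ and $\ell$.

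For each candidate $C$, the hypothesis of the theorem supplies an efficient algorithm computing the optimum color-constrained assignment cost $\wcost_\mathbf{K}(S,C)$. The min-cost-flow style oracles used throughout the paper extend to the weighted case by replacing unit source capacities by $w(s)$, so this step is no more expensive than in the unweighted setting. I would then output the candidate minimizing $\wcost_\mathbf{K}(S,\cdot)$; a constant number of independent repetitions of the enumeration amplifies the overall success probability to any desired constant.

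The main obstacle I expect is justifying that Inaba's sampling guarantee carries over to the weighted, color-constrained setting: formally, one has to argue that the movement bijection $\pi$ from Definition~\ref{def:movement_based_for_colors} transports an optimal color-constrained clustering of $P$ to a clustering of $S'$ whose cluster centroids are sufficiently close to those of $P$, so that applying Inaba's lemma cluster-by-cluster inside $S'$ still produces a $(1+\varepsilon/2)$-approximate center set for the color-constrained optimum on $S'$. Once this is in place, combining the three sources of error (coreset approximation, sampling approximation, and exact assignment via the oracle) yields the advertised $(1+\varepsilon)$-factor, with a total running time that is polynomial in $n$ and $\ell$ and singly exponential in $d$, $k$ and $1/\varepsilon$.
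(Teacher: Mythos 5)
Your proposal matches the paper's proof essentially step for step: stream a $(k,\ell,\varepsilon/3)$-color coreset via Corollary~\ref{thm:size-color-coresets}, enumerate the $|S|^{O(k/\varepsilon)}$ candidate center sets arising from Inaba-style samples of size $O(1/\varepsilon)$ over the coreset support, evaluate each with the constraint's optimal-assignment oracle, and combine the error terms. The one obstacle you flag is resolved more directly than you anticipate: the paper simply proves a weighted version of Inaba's lemma (Lemma~\ref{lem:weighted-inaba}) and applies it cluster-by-cluster to the optimal color-constrained clustering of the expanded coreset itself, so no transport of the optimum of $P$ through the movement bijection $\pi$ is needed (the coreset guarantee already handles that), and since the enumeration is exhaustive the existence argument is deterministic and no probability amplification is required.
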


\paragraph*{Acknowledgements}

This research was supported by the Deutsche Forschungsgemeinschaft, DFG, project number SCHM 2765/1-1. 
We thank anonymous referees for their helpful feedback and pointing out typos.

\bibliography{references}

\begin{thebibliography}{10}

\bibitem{ANSW17}
Sara Ahmadian, Ashkan Norouzi{-}Fard, Ola Svensson, and Justin Ward.
\newblock Better guarantees for k-means and euclidean k-median by primal-dual
  algorithms.
\newblock In {\em Proceedings of the 58th {IEEE} Annual Symposium on
  Foundations of Computer Science ({FOCS})}, pages 61--72, 2017.

\bibitem{AS12-lowerbounded}
Sara Ahmadian and Chaitanya Swamy.
\newblock Improved approximation guarantees for lower-bounded facility
  location.
\newblock In {\em Proceedings of the 10th International Workshop on
  Approximation and Online Algorithms (WAOA)}, volume 7846 of {\em Lecture
  Notes in Computer Science}, pages 257--271. Springer, 2012.

\bibitem{AS16}
Sara Ahmadian and Chaitanya Swamy.
\newblock Approximation algorithms for clustering problems with lower bounds
  and outliers.
\newblock In {\em 43rd International Colloquium on Automata, Languages, and
  Programming, (ICALP)}, pages 69:1--69:15, 2016.

\bibitem{ABS10}
Pranjal Awasthi, Avrim Blum, and Or~Sheffet.
\newblock Stability yields a {PTAS} for k-median and k-means clustering.
\newblock In {\em 51th Annual {IEEE} Symposium on Foundations of Computer
  Science (FOCS)}, pages 309--318, 2010.

\bibitem{BFS20}
Sayan Bandyapadhyay, Fedor~V. Fomin, and Kirill Simonov.
\newblock On coresets for fair clustering in metric and euclidean spaces and
  their applications.
\newblock {\em CoRR}, abs/2007.10137, 2020.

\bibitem{BS80}
Jon~Louis Bentley and James~B. Saxe.
\newblock Decomposable searching problems {I:} static-to-dynamic
  transformation.
\newblock {\em J. Algorithms}, 1(4):301--358, 1980.

\bibitem{BJK18}
Anup Bhattacharya, Ragesh Jaiswal, and Amit Kumar.
\newblock Faster algorithms for the constrained k-means problem.
\newblock {\em Theoretical Aspects of Computer Science}, 62(1):93--115, 2018.

\bibitem{BPRST17}
Jaroslaw Byrka, Thomas~W. Pensyl, Bartosz Rybicki, Aravind Srinivasan, and Khoa
  Trinh.
\newblock An improved approximation for \emph{k}-median and positive
  correlation in budgeted optimization.
\newblock {\em {ACM} Trans. Algorithms}, 13(2):23:1--23:31, 2017.

\bibitem{CKLV17}
Flavio Chierichetti, Ravi Kumar, Silvio Lattanzi, and Sergei Vassilvitskii.
\newblock Fair clustering through fairlets.
\newblock In {\em Proceedings of the 30th Annual Conference on Neural
  Information Processing Systems (NIPS)}, pages 5036--5044, 2017.

\bibitem{CAL19}
Vincent Cohen{-}Addad and Jason Li.
\newblock On the fixed-parameter tractability of capacitated clustering.
\newblock In {\em Proceedings of the 46th International Colloquium on Automata,
  Languages, and Programming, (ICALP)}, pages 41:1--41:14, 2019.

\bibitem{DX11}
Hu~Ding and Jinhui Xu.
\newblock Solving the chromatic cone clustering problem via minimum spanning
  sphere.
\newblock In Luca Aceto, Monika Henzinger, and Jir{\'{\i}} Sgall, editors, {\em
  38th International Colloquium on Automata, Languages and Programming
  (ICALP)}, volume 6755 of {\em Lecture Notes in Computer Science}, pages
  773--784. Springer, 2011.

\bibitem{DX15}
Hu~Ding and Jinhui Xu.
\newblock A unified framework for clustering constrained data without locality
  property.
\newblock In Piotr Indyk, editor, {\em Proceedings of the Twenty-Sixth Annual
  {ACM-SIAM} Symposium on Discrete Algorithms (SODA)}, pages 1471--1490.
  {SIAM}, 2015.

\bibitem{FL11}
Dan Feldman and Michael Langberg.
\newblock A unified framework for approximating and clustering data.
\newblock In {\em Proceedings of the 43rd {ACM} Symposium on Theory of
  Computing (STOC)}, pages 569--578, 2011.

\bibitem{FGSSS13}
Hendrik Fichtenberger, Marc Gill{\'e}, Melanie Schmidt, Chris Schwiegelshohn,
  and Christian Sohler.
\newblock {BICO: BIRCH Meets Coresets for k-Means Clustering }.
\newblock In {\em Proceedings of the 21st European Symposium on Algorithms},
  pages 481--492, 2013.

\bibitem{FS05}
Gereon Frahling and Christian Sohler.
\newblock Coresets in dynamic geometric data streams.
\newblock In {\em Proceedings of the 37th Annual {ACM} {SIGACT} Symposium on
  Theory of Computing (STOC)}, pages 209 -- 217, 2005.

\bibitem{G00}
Sudipto Guha.
\newblock {\em Approximation algorithms for facility location problems}.
\newblock PhD thesis, {S}tanford {U}niversity, 2000.

\bibitem{HPM04}
Sariel Har{-}Peled and Soham Mazumdar.
\newblock On coresets for k-means and k-median clustering.
\newblock In {\em Proceedings of the 36th Annual {ACM} {SIGACT} Symposium on
  Theory of Computing (STOC)}, pages 291 -- 300, 2004.

\bibitem{HPM18}
Sariel Har{-}Peled and Soham Mazumdar.
\newblock Coresets for {\textdollar}k{\textdollar}-means and
  {\textdollar}k{\textdollar}-median clustering and their applications.
\newblock {\em CoRR}, abs/1810.12826, 2018.
\newblock URL: \url{http://arxiv.org/abs/1810.12826}.

\bibitem{HJLW18}
Lingxiao Huang, Shaofeng~H.{-}C. Jiang, Jian Li, and Xuan Wu.
\newblock Epsilon-coresets for clustering (with outliers) in doubling metrics.
\newblock In {\em Proceedings of the 59th {IEEE} Annual Symposium on
  Foundations of Computer Science (FOCS)}, pages 814--825, 2018.

\bibitem{HJV19}
Lingxiao Huang, Shaofeng~H.{-}C. Jiang, and Nisheeth~K. Vishnoi.
\newblock Coresets for clustering with fairness constraints.
\newblock In {\em Advances in Neural Information Processing Systems 32: Annual
  Conference on Neural Information Processing Systems (NeurIPS)}, pages
  7587--7598, 2019.

\bibitem{IKI94}
Mary Inaba, Naoki Katoh, and Hiroshi Imai.
\newblock Applications of weighted voronoi diagrams and randomization to
  variance-based \emph{k}-clustering (extended abstract).
\newblock In {\em Proceedings of the Tenth Annual Symposium on Computational
  Geometry (SoCG)}, pages 332--339, 1994.

\bibitem{JMS02}
Kamal Jain, Mohammad Mahdian, and Amin Saberi.
\newblock A new greedy approach for facility location problems.
\newblock In {\em Proceedings of the 34th Annual {ACM} {SIGACT} Symposium on
  Theory of Computing (STOC)}, pages 731 -- 740, 2002.

\bibitem{KLS18}
Ravishankar Krishnaswamy, Shi Li, and Sai Sandeep.
\newblock Constant approximation for k-median and k-means with outliers via
  iterative rounding.
\newblock In Ilias Diakonikolas, David Kempe, and Monika Henzinger, editors,
  {\em Proceedings of the 50th Annual {ACM} {SIGACT} Symposium on Theory of
  Computing (STOC)}, pages 646--659, 2018.

\bibitem{KSS10}
Amit Kumar, Yogish Sabharwal, and Sandeep Sen.
\newblock Linear-time approximation schemes for clustering problems in any
  dimensions.
\newblock {\em Journal of the ACM}, 57(2):5:1 -- 5:32, 2010.

\bibitem{LSW17}
Euiwoong Lee, Melanie Schmidt, and John Wright.
\newblock Improved and simplified inapproximability for k-means.
\newblock {\em Information Processing Letters}, 120:40--43, 2017.

\bibitem{SSS19}
Melanie Schmidt, Chris Schwiegelshohn, and Christian Sohler.
\newblock Fair coresets and streaming algorithms for fair $k$-means.
\newblock In Evripidis Bampis and Nicole Megow, editors, {\em 17th
  International Workshop on Approximation and Online Algorithms (WAOA)}, volume
  11926 of {\em Lecture Notes in Computer Science}, pages 232--251. Springer,
  2019.

\bibitem{S10}
Zoya Svitkina.
\newblock Lower-bounded facility location.
\newblock {\em {ACM} Transaction on Algorithms}, 6(4):69:1--69:16, 2010.

\end{thebibliography}

\appendix

\section{Movement-based constructions yield coresets}\label{movement-appendix}

\lemMovementCoreset*
\begin{proof}
    Enumerate $P = \{ p_1, \ldots, p_n \}$ and $Q = \{ q_1, \ldots, q_n\}$ such that
    $\pi(p_i) = q_i$ holds for all $1 \leq i \leq n$ and define
    the two vectors
    \[ v_P = (\dist(p_1, C), \ldots, \dist(p_n, C) )^\top\]
    and
    \[ v_Q = (\dist(q_1, C), \ldots, \dist(q_n, c) )^\top.\]
    The difference $|\dist(p_i, C) - \dist(q_i, C)|$ of the entries is at most $\dist(p_i, q_i)$ and so
    \begin{align*} \|v_Q - v_P\|_m^m & = \sum_{i=1}^n ||\dist(p_i, C) - \dist(q_i, C)||^m                                              \\
                          & \leq \sum_{i=1}^n \dist(p_i, q_i)^m \leq \left(\frac{\varepsilon}{2m}\right)^m \cdot \cost(P,C)
    \end{align*}
    by the assumption.
    Since $\cost(P,C) = \|v_P\|_m^m$ and $\cost(Q,C) = \|v_Q\|_m^m$ we can estimate
    \begin{align*}
        \cost(Q,C)^{\frac{1}{m}} & = \|v_Q\|_m = \|v_P + \Delta\|_m \leq \|v_P\|_m + \|\Delta\|_m                                                                                                \\
                                 & \leq \|v_P\|_m + \frac{\varepsilon}{2m} \cdot \left(\cost(P,C)\right)^{\frac{1}{m}} = \left(1 + \frac{\varepsilon}{2m}\right) \cdot \cost(P,C)^{\frac{1}{m}}.
    \end{align*}
    Raising both sides to the exponent $m$ and using the inquality $\left(1 + \frac{\varepsilon}{2m}\right)^m \leq e^{\frac{\varepsilon}{2}} \leq 1 + \varepsilon$ for $\varepsilon \in (0,1]$ yields
    \[ \cost(Q,C) - \cost(P,C) \leq \varepsilon \cdot \cost(P,C).\]
    To show that $\cost(P,C) - \cost(Q,C) \leq \varepsilon \cdot \cost(P,C)$ also holds just reverse the roles of $P$ and $Q$ in the above proof to get
    \[ \cost(P,C) - \cost(Q,C) \leq \varepsilon \cdot \cost(Q,C).\]
    If $\cost(P,C) < \cost(Q,C)$ the left side is negative and must be smaller than $\varepsilon \cdot \cost(P,C)$.
    Otherwise $\cost(Q,C) \leq \cost(P,C)$ and so
    \[ \cost(P,C) - \cost(Q,C) \leq \varepsilon \cdot \cost(Q,C) \leq \varepsilon \cdot \cost(P,C).\]
\end{proof}

\corMovBasedCoreset*
\begin{proof}
    Since the weights of $(S,w)$ are positive integers we can form the expanded multiset $S'$ that contains every point $s \in S$ exactly $w(s)$ times.
    Naturally $\pi: S \to P$ then gives rise to a map $\pi': S' \to P$ with
    \[ \sum_{x \in P} \dist(x, \pi'(x))^m \leq \left(\frac{\varepsilon}{2m}\right)^m \cdot \opt_P. \]
    Since $\opt_P \leq \cost(P, C)$ for any center set $C \subseteq X$ it follows that Lemma \ref{lem:movement_based_is_coreset} also holds for all $C$ and $(S, w)$ is a $(k, \varepsilon)$-coreset for $P$.
\end{proof}

\section{Size Coresets are mergeable}\label{sec-size-coresets-mergeable}

\thmMergeableSizeConstraints*
\begin{proof}
    Fix an arbitrary set of $k$ centers $C \subseteq X$ and let $\mathbf{K} = \{K\}$ for now consist of a single matrix $K$.
    Let $\gamma: P \to C$ be an optimal assignment satisfying $K$, so that
    \[ \cost_{K}(P, C) = \sum_{x \in P} \dist(x, \gamma(x))^m.\]
    Split up $\gamma$ into the assignments $\gamma_1: P_1 \to C$ and $\gamma_2: P_2 \to C$ by setting $\gamma_i(x) = \gamma(x)$ for $p \in P_i$. Denote with $K_1$ and $K_2$ the sizes of the clusters induced by $\gamma_1$ and $\gamma_2$, respectively.
    Since $\gamma_i$ satisfies $K_i$, we have $\cost_{K_i}(P_i, C) \leq \sum_{x \in P_i} \dist(x, \gamma_i(x))^m$ for $i = 1,2$ and so
    \begin{align*}
        \cost_{K}(P, C) & = \sum_{x \in P} \dist(x, \gamma(x))^m                                               \\
                        & = \sum_{x \in P_1}\dist(x, \gamma_1(x))^m + \sum_{x \in P_2} \dist(x, \gamma_2(x))^m \\
                        & \geq \cost_{K_2}(P_1, C) + \cost_{K_2}(P_2, C).
    \end{align*}
    On the other hand, any cheaper assignment $\gamma_i': P_i \to C$ satisfying $K_i$ could yield a cheaper assignment $\gamma': P \to C$ satisfying $K$ by taking $\gamma$ and just have it act like $\gamma_i'$ on $P_i$. As such both assignment $\gamma_1$ and $\gamma_2$ are already optimal and we have that
    \[ \cost_{K}(P, C) = \cost_{K_1}(P_1, C) + \cost_{K_2}(P_2, C). \]
    Similarly, we can argue that $\wcost_{K}(S, C) \leq \wcost_{K_1}(S_1, C) + \wcost_{K_2}(S_2, C)$. Note that this is not necessarily an equality, since an optimal assignment for $S$ satisfying $K$ does not necessarily split up into the sizes $K_1$ and $K_2$ on $S_1$ and $S_2$. However, combining both yields
    \begin{align*}
        \wcost_{K}(S, C) & \leq \wcost_{K_1}(S_1, C) + \wcost_{K_2}(S_2, C)                                               \\
                         & \leq (1 + \varepsilon) \cdot \cost_{K_1}(P_1, C) + (1 + \varepsilon) \cdot \cost_{K_2}(P_2, C) \\
                         & = (1 + \varepsilon) \cost_{K}(P, C).
    \end{align*}
    We can argue similarly that $(1 - \varepsilon) \cost_{K}(P, C) \leq \wcost_{K}(S, C)$ by just splitting up $K$ into $K_1'$ and $K_2'$ according to an optimal assignment for $S$ satisfying $K$. Then
    \begin{align*}
        \wcost_{K}(S, C) & = \wcost_{K_1}(S_1, C) + \wcost_{K_2}(S_2, C)                                                  \\
                         & \geq (1 - \varepsilon) \cdot \cost_{K_1}(P_1, C) + (1 - \varepsilon) \cdot \cost_{K_2}(P_2, C) \\
                         & \geq (1 - \varepsilon) \cost_{K}(P, C)
    \end{align*}
    and we get
    \[ |\cost_\mathbf{K}(P, C) - \wcost_\mathbf{K}(S, C)| \leq \varepsilon \cost_\mathbf{K}(P, C)\]
    in the case of a singular size constraint $\mathbf{K} = \{K\}$.

    For general size constraints $\mathbf{K}$ that consist of more than one matrix we can now argue as follows.
    Let $\gamma_P: P \to C$ be an optimal assignment of $P$ satisfying $\mathbf{K}$ and $\gamma_S: S \to C$ an optimal assignment of $S$ satisfying $\mathbf{K}$. We know that $\gamma_P$ must satisfy some $K_P \in \mathbf{K}$ and so by our observation there must exist an allocation $\gamma'_S: S \to C$ satisfying $K_P$ such that $\cost(\gamma'_S) \leq (1 + \varepsilon)\cost(\gamma_P)$. The space of solutions from which $\gamma_S$ is chosen includes all allocations satisfying $K_P$ and so
    \[ \wcost_\mathbf{K}(S, C) = \cost(\gamma_S) \leq \cost(\gamma'_S) \leq (1 + \varepsilon) \cost(\gamma_P) = (1 + \varepsilon) \cdot \cost_\mathbf{K}(P, C). \]
    On the other hand, $\gamma_S$ must satisfy some color constraint $K_S \in \mathbf{K}$ and so there exists an allocation $\gamma'_P: P \to C$ satisfying $K_S$ for which $\cost(\gamma_S) \geq (1 - \varepsilon) \cost(\gamma'_P)$. Again, $\cost(\gamma_P) \leq \cost(\gamma'_P)$ and so
    \[ \wcost_\mathbf{K}(S, C) = \cost(\gamma_S) \geq (1 - \varepsilon) \cost(\gamma'_P) \geq (1 - \varepsilon) \cost(\gamma_P) = (1 - \varepsilon) \cdot \cost_\mathbf{K}(P, C)\]
    proves the claim.
\end{proof}

\section{PTAS for constrained \kk-means}\label{appendix:inaba}

We will now see how color coresets may be used to solve the constrained clustering problems indicated above. Recall, that for many constraints an optimal assignment can efficiently be found using min-cost-flows for a given set of centers (see e.g. \cite{DX15}). These constraints include size constraints, must-link constraints, chromatic constraints and $l$-diversity constraints. Arbitrary cannot-link constraints and fairness constraints might be more difficult, but for the latter at least constant factor approximation algorithms exist. So using the color coreset we would like to find a set of centers that admit a cheap assignment. We do this by adjusting a lemma by Inaba et al. \cite{IKI94} to also work with weighted input sets.

\begin{lemma}
    Let $(S, w)$ be a multi-set of weighted points in $\mathbb{R}^d$ and $R = (r_1, \ldots r_m)$ be the results of sampling $m$ points independently from $S$ according to the probability mass function $f(s) = \frac{w(s)}{w(S)}$. Then
    \[ \left\| \mu^w(S) - \mu(R) \right\|^2 < \frac{1}{\delta m w(S)} \sum_{s \in S} w(s) \left\| s - \mu^w(S) \right\|^2 \]
    with probability at least $1 - \delta$ for any $\delta > 0$. Here
    $ \mu^w(S) = \frac{1}{w(S)} \sum_{s \in S} w(s) \, s$
    denotes the weighted mean.
\end{lemma}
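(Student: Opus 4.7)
The plan is to translate the weighted sampling setup into i.i.d.\ sampling from a discrete distribution and then apply the standard second-moment/Markov argument of Inaba et al.\ verbatim. Treat each sample $r_i$ as an $\R^d$-valued random variable with law $\Pr[r_i = s] = w(s)/w(S)$, so that $\mathbb{E}[r_i] = \sum_{s \in S} \frac{w(s)}{w(S)} \, s = \mu^w(S)$. By linearity of expectation, $\mathbb{E}[\mu(R)] = \mu^w(S)$ as well.

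Next I would compute the second moment $\mathbb{E}[\|\mu(R) - \mu^w(S)\|^2]$. Writing $\mu(R) - \mu^w(S) = \frac{1}{m}\sum_{i=1}^m (r_i - \mu^w(S))$ and expanding the squared norm, the cross terms $\mathbb{E}[\langle r_i - \mu^w(S), r_j - \mu^w(S)\rangle]$ for $i \neq j$ vanish by independence and the zero-mean property. What remains is
\[
\mathbb{E}\bigl[\|\mu(R) - \mu^w(S)\|^2\bigr] = \frac{1}{m^2}\sum_{i=1}^m \mathbb{E}\bigl[\|r_i - \mu^w(S)\|^2\bigr] = \frac{1}{m} \cdot \frac{1}{w(S)} \sum_{s \in S} w(s) \, \|s - \mu^w(S)\|^2,
\]
where the last equality just unfolds the definition of the sampling distribution.

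Finally I would apply Markov's inequality to the non-negative random variable $\|\mu(R) - \mu^w(S)\|^2$: for any $\delta > 0$,
\[
\Pr\!\left[\|\mu(R) - \mu^w(S)\|^2 \ge \frac{1}{\delta}\,\mathbb{E}\bigl[\|\mu(R) - \mu^w(S)\|^2\bigr]\right] \le \delta,
\]
and substituting the expression computed above yields the claimed bound with probability at least $1 - \delta$.

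There is no real obstacle here; the only thing to be careful about is ensuring that the weighted mean $\mu^w(S)$ really is the expectation of a single sample (so that the variance decomposition has zero cross terms), which is immediate from the construction of the distribution. The weights $w(s)$ never need to be integers for this argument; all that matters is that $\sum_s w(s) = w(S)$ so that $w(s)/w(S)$ defines a probability mass function.
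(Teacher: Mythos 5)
Your proof is correct and follows essentially the same route as the paper's: identify $\mu^w(S)$ as the expectation of a single sample, compute $\E\bigl[\|\mu(R)-\mu^w(S)\|^2\bigr]$ via the variance of the sample mean (your explicit expansion with vanishing cross terms is just the standard $\V(\mu(R))=\frac{1}{m^2}\sum_i \V(r_i)$ identity the paper invokes), and finish with Markov's inequality. No gaps.
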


\begin{proof}
    The expected outcome of taking a single sample is
    \[ \E(r_i) = \sum_{s \in S} \frac{w(s)}{w(S)} \, s = \frac{1}{w(S)} \sum_{s \in S} w(s) \, s =  \mu^w(S) \]
    with a variance of
    \[ \V(r_i) = \E(\|r_i - \E(r_i)\|^2)
        = \E\left( \| r_i - \mu^w(S) \|^2\right)
        = \sum_{s \in S} \frac{w(s)}{w(S)}\left\| s - \mu^w(S) \right\|^2.
    \]
    By the linearity of the expectation
    $\E(\mu(R)) = \frac{1}{m} \sum_{i = 1}^m \E(r_i) = \mu^w(S)$
    and so
    \begin{align*}
             & \E\left(\| \mu^w(S) - \mu(R) \|^2\right)
        = \E\left(\| \E(\mu(R)) - \mu(R) \|^2\right)
        = \V(\mu(R))                                          \\
        = \; & \V\left(\frac{1}{m} \sum_{i = 1}^m  r_i\right)
        =  \frac{1}{m^2} \sum_{i = 1}^m \V\left(r_i\right)
        = \frac{1}{m w(S)} \sum_{s \in S} w(s)\left\| s - \mu^w(S) \right\|^2.
    \end{align*}
    In other words, the right hand side of the equation in the lemma is $\frac{1}{\delta}$ times the expected value of the left hand side. Thus the claim is a direct result of Markov's inequality.
\end{proof}

This preliminary lemma lets us prove a weighted version of the Inaba lemma.

\begin{lemma}[Weighted Inaba Lemma]
    \label{lem:weighted-inaba}
    Let $(S, w)$ be a multi-set of $n$ weighted points in $\mathbb{R}^d$ and $R = (r_1, \ldots, r_m)$ be the results of sampling $m$ points independently from $S$ according to the probability mass function $f(s) = \frac{w(s)}{w(S)}$. Then
    \[ \sum_{s \in S} w(s) \|s - \mu(R)\|^2 < (1 + \frac{1}{\delta m}) \sum_{s \in S} w(s) \|s - \mu^w(S)\|^2 \]
    with probability at least $1 - \delta$ for any $\delta \in (0, 1]$.
\end{lemma}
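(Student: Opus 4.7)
The plan is to reduce this directly to the preliminary (weighted variance) lemma just proved, via the standard "centroid identity" that the mean minimizes the sum of squared distances. Concretely, for any point $z \in \mathbb{R}^d$ and any weighted set $(S,w)$ one has the Pythagorean-style decomposition
\[
\sum_{s \in S} w(s) \|s - z\|^2 \;=\; \sum_{s \in S} w(s) \|s - \mu^w(S)\|^2 \;+\; w(S)\,\|z - \mu^w(S)\|^2,
\]
which follows by expanding $\|s - z\|^2 = \|s - \mu^w(S)\|^2 + 2\langle s - \mu^w(S),\, \mu^w(S) - z\rangle + \|\mu^w(S) - z\|^2$, summing with weights $w(s)$, and noting that the cross term vanishes because $\sum_s w(s)(s - \mu^w(S)) = 0$ by the definition of $\mu^w(S)$. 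I would state this as a short preliminary identity (a one-line computation) before applying it.

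Next I would instantiate this identity at $z = \mu(R)$, obtaining
\[
\sum_{s \in S} w(s) \|s - \mu(R)\|^2 \;=\; \sum_{s \in S} w(s) \|s - \mu^w(S)\|^2 \;+\; w(S)\,\|\mu(R) - \mu^w(S)\|^2.
\]
The first term on the right is exactly the quantity that appears on the right-hand side of the target inequality (without the $\frac{1}{\delta m}$ factor). So the task reduces to bounding the second term with probability at least $1-\delta$.

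For that term I would invoke the previous lemma, which states that with probability at least $1-\delta$,
\[
\|\mu^w(S) - \mu(R)\|^2 \;<\; \frac{1}{\delta m\, w(S)} \sum_{s \in S} w(s) \|s - \mu^w(S)\|^2.
\]
Multiplying both sides by $w(S)$ gives $w(S)\|\mu(R) - \mu^w(S)\|^2 < \frac{1}{\delta m} \sum_{s \in S} w(s)\|s - \mu^w(S)\|^2$, and substituting back into the decomposition immediately yields the claimed bound $(1 + \frac{1}{\delta m}) \sum_{s \in S} w(s)\|s - \mu^w(S)\|^2$.

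There is essentially no obstacle here; the proof is a mechanical combination of the weighted centroid identity with the Markov-type variance estimate from the preceding lemma. The only thing to be mildly careful about is the bookkeeping of the weights $w(s)$ (as opposed to treating $S$ as a uniform sample space), but since the previous lemma already handles sampling according to $f(s) = w(s)/w(S)$, no further probabilistic argument is needed and the $1-\delta$ confidence is inherited directly.
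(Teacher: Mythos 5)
Your proposal is correct and follows exactly the same route as the paper's own proof: the weighted Pythagorean decomposition at $z=\mu(R)$ combined with the preceding Markov-type bound on $w(S)\|\mu(R)-\mu^w(S)\|^2$. The only difference is that you spell out the justification of the centroid identity (vanishing cross term), which the paper states without proof.
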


\begin{proof}
    We have
    \begin{align*}
        \sum_{s \in S} w(s) \|s - \mu(R)\|^2 = \sum_{s \in S} w(s) \|s - \mu^w(S)\|^2 + w(S) \|\mu(R) - \mu^w(S)\|^2.
    \end{align*}
    By the previous lemma
    \begin{align*}
        w(S) \|\mu(R) - \mu^w(S)\|^2 < \frac{1}{\delta m} \sum_{s \in S} w(s) \left\| s - \mu^w(S) \right\|^2
    \end{align*}
    holds with probability of at least $1 - \delta$. Combining both formulas finishes the proof.
\end{proof}

For large $m$ the centroid of the sampled set $R$ yields a better and better center for the original multiset $S$.

\begin{corollary}
    \label{cor:weighted-inaba}
    Let $(S, w)$ be a multi-set of $n$ weighted points in $\mathbb{R}^d$ and $\varepsilon > 0$. Then for every $m > \frac{1}{\varepsilon}$ there exists a multi-set $R$ of $m$ points from $S$, such that
    \[ \sum_{s \in S} w(s) \|s - \mu(R)\|^2 < (1 + \varepsilon) \sum_{s \in S} w(s) \|s - \mu^w(S)\|^2. \]
\end{corollary}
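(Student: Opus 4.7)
The plan is a routine probabilistic existence argument: take the Weighted Inaba Lemma stated just above, choose the failure probability $\delta$ appropriately in terms of $m$ and $\varepsilon$, and conclude that the guaranteed event, having positive probability, must actually be nonempty.

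More concretely, I would first note that the hypothesis $m > 1/\varepsilon$ is equivalent to $m\varepsilon > 1$, which means $1/(m\varepsilon) < 1$. I can therefore fix any real $\delta$ with
\[
\frac{1}{m\varepsilon} < \delta < 1,
\]
for example $\delta = \tfrac{1}{2}\bigl(1 + \tfrac{1}{m\varepsilon}\bigr)$. For this $\delta$ we have simultaneously $\delta < 1$, so that $1-\delta > 0$, and $\delta m \varepsilon > 1$, so that $\tfrac{1}{\delta m} < \varepsilon$ and hence
\[
1 + \frac{1}{\delta m} \;\le\; 1 + \varepsilon.
\]

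Next I apply Lemma~\ref{lem:weighted-inaba} with this $\delta$. Sampling $R = (r_1,\dots,r_m)$ i.i.d.\ from $S$ according to the mass function $f(s) = w(s)/w(S)$, the lemma tells us that with probability at least $1-\delta > 0$,
\[
\sum_{s \in S} w(s)\|s - \mu(R)\|^2 \;<\; \Bigl(1+\tfrac{1}{\delta m}\Bigr) \sum_{s \in S} w(s)\|s - \mu^w(S)\|^2 \;\le\; (1+\varepsilon)\sum_{s \in S} w(s)\|s - \mu^w(S)\|^2.
\]
Since this event has strictly positive probability, there must exist at least one realization $R$ of the sample satisfying the claimed inequality; this $R$ is the desired multi-set of $m$ points drawn from $S$.

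There is essentially no obstacle here beyond the bookkeeping of constants: the only thing to check is that the two competing requirements on $\delta$, namely $\delta < 1$ (so the lemma yields a nontrivial existence statement) and $\delta \ge 1/(m\varepsilon)$ (so the approximation factor collapses to $1+\varepsilon$), are compatible, and this is exactly what the hypothesis $m > 1/\varepsilon$ guarantees. Note that I am only using Lemma~\ref{lem:weighted-inaba} as a black box and converting its probabilistic bound into a deterministic existence statement in the standard way, so no further computation or new idea is required.
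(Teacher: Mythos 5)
Your proposal is correct and follows essentially the same route as the paper: apply the Weighted Inaba Lemma with a suitably chosen $\delta$ and convert the positive-probability event into an existence statement. In fact your choice of $\delta$ is on the correct side of the threshold --- one needs $\delta \geq \frac{1}{\varepsilon m}$ (together with $\delta < 1$) so that $1 + \frac{1}{\delta m} \leq 1+\varepsilon$, whereas the paper's proof writes the condition as $\delta \leq \frac{1}{\varepsilon m}$, which appears to be a typo that your argument silently repairs.
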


\begin{proof}
    Apply Lemma \ref{lem:weighted-inaba} for any $\delta$ with $0 < \delta \leq \frac{1}{\varepsilon m} < 1$. Then
    \begin{align*}
        \sum_{s \in S} \| s - \mu(R) \|^2 & < \left(1 + \frac{1}{\delta m}\right) \sum_{s \in S} \| s - \mu^w(S) \|^2                  \\
                                          & < \left(1 + \frac{1}{\frac{1}{\varepsilon m} m}\right) \sum_{s \in S} \| s - \mu^w(S) \|^2 \\
                                          & = \left(1 + \varepsilon\right) \sum_{s \in S} \| s - \mu^w(S) \|^2
    \end{align*}
    holds with probability $1 - \delta > 0$. Since this probability is positive there must be at least one such multi-set $R$ and the claim follows.
\end{proof}

The important part is that this let's us find a good center without involving the weights of the original set. We can use this to find a good center for every cluster of an optimal clustering by fully enumerating partitions of small (weighted) subsets. Let $\mathbf{K}$ be an color constraint and $S$ a color coreset with weights $w: S \to \mathbb{N}$ and colors $f: S \to \{1, \ldots, l\}$. A $k$-clustering of $S$ corresponds to a collection of weighted sets $(S_1, w_1), \ldots, (S_k, w_k)$ such that $S_i \subset S$ for all $i$ and $\sum_{i = 1}^k w_i(s) = w(s)$ for all $s \in S$ (for ease of notation just assume that $w_i(s) = 0$ if $s \not\in S_i$). Assume that $(S_1, w_1), \ldots, (S_k, w_k)$ has optimal cost $OPT$ among all clusterings satisfying $\mathbf{K}$.
By Corollary \ref{cor:weighted-inaba} there exist $k$ multisets $R_1, \ldots, R_k \subset S$ of size $\lceil \frac{2}{\varepsilon} \rceil$ such that
\[ \sum_{s \in S_i} w_i(s) \|s - \mu(R_i)\|^2 < (1 + \varepsilon) \sum_{s \in S_i} w_i(s) \|s - \mu^w(S_i)\|^2 \]
for all $i$. Summing these up,
\[ \sum_i \sum_{s \in S_i} w_i(s) \|s - \mu(R_i)\|^2 < (1 + \varepsilon) \sum_i \sum_{s \in S_i} w_i(s) \|s - \mu^w(S_i)\|^2 = (1 + \varepsilon) \cdot OPT \]
shows that an optimal assignment of $S$ to the centers $\mu(R_1), \ldots, \mu(R_k)$ is a $(1 + \varepsilon)$-approximation. Finding these subsets $R_1, \ldots, R_k$ can be done by a full enumeration over all equal size $k$-partitionings of all multisets consisting of $k \lceil \frac{2}{\varepsilon}\rceil$ elements in $S$. After passing to the centroids this yields a list of $|S|^{O(\frac{k}{\varepsilon})}$ sets of $k$ centers each, of which at least one will be good. Now one can just compute an optimal assignment to each of those center sets and choose the best one. As mentioned in the beginning, many constrained problems admit efficient optimal assignment algorithms, so this is not an issue.

\ptasInaba*

\begin{proof}
    Just compute a $(k, \ell, \frac{\varepsilon}{3})$-color coreset in a stream (which is possible by Corollary~\ref{thm:size-color-coresets}) and then apply the above algorithm for $\varepsilon' = \frac{\varepsilon}{3}$. Assigning points to centers is relatively cheap and so the running time to get an assigment with cost at most $(1 + \frac{\varepsilon}{3})^2 \leq (1 + \varepsilon)$ times that of the global optimum is $(\ell k \varepsilon^{-d} \log n)^{O(\frac{k}{\varepsilon})}$ if we use the constructions from Theorem \ref{thm:size-color-coresets}.
\end{proof}

\end{document}